\newcommand{\eqdef}{\triangleq}
\newcommand{\cE}{{\cal E}}
\newcommand{\cG}{{\cal G}}
\newcommand{\cL}{{\cal L}}
\newcommand{\cS}{{\cal S}}
\newcommand{\cV}{{\cal V}}
\newcommand{\cP}{{\cal P}}
\newcommand{\cT}{{\cal T}}
\newcommand{\mK}{\mathcal{K}}
\newcommand{\sP}{\cP}
\newcommand{\sG}{\cG}
\newcommand{\Gr}{\smash{{\sG\kern-1.5pt}_q\kern-0.5pt(n,k)}}
\newcommand{\Grr}{\smash{{\sG\kern-1.5pt}_q\kern-0.5pt(n,r)}}
\newcommand{\Gfourk}{\smash{{\sG\kern-1.5pt}_q\kern-0.5pt(4k,2k)}}
\newcommand{\Gk}{\smash{{\sG\kern-1.5pt}_q\kern-0.5pt(n,k_1)}}
\newcommand{\Gkk}{\smash{{\sG\kern-1.5pt}_q\kern-0.5pt(n,k_2)}}
\newcommand{\Grtwo}{\smash{{\sG\kern-1.5pt}_2\kern-0.5pt(n,k)}}
\newcommand{\Gkone}{\smash{{\sG\kern-1.5pt}_q\kern-0.5pt(n,k_1)}}
\newcommand{\Gktwo}{\smash{{\sG\kern-1.5pt}_q\kern-0.5pt(n,k_2)}}
\newcommand{\Ps}{\smash{{\sP\kern-2.0pt}_q\kern-0.5pt(n)}}
\newtheorem{theorem}{Theorem}
\newtheorem{cor}{Corollary}
\newtheorem{lemma}[theorem]{Lemma}
\newtheorem{conjecture}{Conjecture}
\newtheorem{definition}[theorem]{Definition}
\newtheorem{example}{Example}
\begin{document}
\title{Constructions of Snake-in-the-Box Codes\\ for Rank Modulation}
\author{Michal Horovitz and Tuvi Etzion, Fellow, IEEE%
\thanks{Michal Horovitz is with the Department of Computer Science,
Technion --- Israel Institute of Technology, Haifa 32000, Israel.
(email: michalho@cs.technion.ac.il). This work is part of her Ph.D.
thesis performed at the Technion.
Tuvi Etzion is with the Computer Science Department, Technion--Israel
Institute of Technology, Haifa 32000, Israel
(e-mail: etzion@cs.technion.ac.il).
}
\thanks{This work was supported in part by the United States --- Israel
Binational Science~Foundation (BSF), Jerusalem, Israel, under Grant 2012016.
This paper was presented in part in the Information Theory and Applications workshop, San Diego, February 2014.}
}

\maketitle
\vspace{-0.5cm}
\begin{abstract}
Snake-in-the-box code is a Gray code which is capable of detecting
a single error. Gray codes are important in the context of
the rank modulation scheme which was suggested recently for
representing information in flash memories. For a Gray code
in this scheme the codewords are permutations, two consecutive
codewords are obtained by using the "push-to-the-top" operation,
and the distance measure is defined on permutations. In this paper
the Kendall's $\tau$-metric is used as the distance
measure. We present a general method
for constructing such Gray codes. We apply the method recursively
to obtain a snake of length $M_{2n+1}=((2n+1)(2n)-1)M_{2n-1}$ for permutations of $S_{2n+1}$,
from a snake of length $M_{2n-1}$ for permutations of~$S_{2n-1}$.
Thus, we have
$\lim\limits_{n\to \infty} \frac{M_{2n+1}}{S_{2n+1}}\approx 0.4338$,
improving on the previous known ratio of
$\lim\limits_{n\to \infty} \frac{1}{\sqrt{\pi n}}$.
By using the general method we also present a direct construction.
This direct construction is based on necklaces and it might
yield snakes of length $\frac{(2n+1)!}{2} -2n+1$ for permutations
of $S_{2n+1}$. The direct construction was applied successfully for
$S_7$ and $S_9$, and hence
$\lim\limits_{n\to \infty} \frac{M_{2n+1}}{S_{2n+1}}\approx 0.4743$.
\end{abstract}

\begin{IEEEkeywords}
Flash memory, Gray code, necklaces, push-to-the-top, rank modulation scheme,
snake-in-the-box code, spanning tree, 3-uniform hypergraph.
\end{IEEEkeywords}

\maketitle
\section{Introduction}

Flash memory is a non-volatile technology that is both electrically
programmable and electrically erasable. It incorporates a set of cells
maintained at a set of levels of charge to encode information.
While raising the charge level of a cell is an easy operation,
reducing the charge level requires the erasure of the whole block to which the cell belongs.
For this reason charge is injected into the cell over several iterations.
Such programming is slow and can cause errors since cells may be injected with extra unwanted charge.
Other common errors in flash memory cells are due to charge leakage and reading disturbance that
may cause charge to move from one cell to its adjacent cells.
In order to overcome these problems, the novel framework
of \emph{rank modulation} was introduced in~\cite{JMSB}.
In this setup the information is carried by the relative ranking of the
cells' charge levels and not by the absolute values of the charge levels.
This allows for more efficient programming of cells, and coding by the ranking of the cells' levels
is more robust to charge leakage than coding by their actual values.
In this model codes are subsets of~$S_n$,
the set of all permutations on $n$ elements,
and the codewords are members of $S_n$,
where each permutation corresponds to a ranking of $n$ cells' levels from the highest one to the lowest.
For example, the charge levels $(c_1,c_2,c_3,c_4)=(5,1,3,4)$ are represented by the codeword $[1,4,3,2]$
since the first cell has the highest level, the forth cell has the next highest level and so on.

To detect and/or correct errors caused by injection of extra charge or due
to charge leakage we will use an appropriate
distance measure.
Several metrics on permutations are used for this purpose. In this
paper we will consider only the Kendall's $\tau$-metric~\cite{JSB10,KeGi90}.
The Kendall's $\tau$-distance between two permutation $\pi_1$ and $\pi_2$ in $S_n$
is the minimum adjacent transpositions required to obtained $\pi_2$ from $\pi_1$, where adjacent transposition is an exchange of two distinct adjacent elements.
For example, the Kendall's $\tau$-distance between $\pi_1=[2,1,4,3]$ and $\pi_2=[2,4,3,1]$ is $2$ as
$[2,1,4,3]\to [2,4,1,3] \to [2,4,3,1]$.
Two permutations in this metric are at distance
one if they differ in exactly one pair of adjacent elements.
Distance one between these two permutations represent an exchange of two
cells, which are adjacent in the permutation, due to a small change in
their charge level which change their order.

Gray codes are very important in the context of rank modulation
as was explained in~\cite{JMSB}. They are used in many other applications,
e.g.~\cite{EtPa96,SaWi95}. An excellent survey on Gray codes is given in~\cite{Sav97}.
The usage of Gray codes for
rank modulation was also discussed in~\cite{GLSB11,GLSB13,JMSB,YeSc12}.
The permutations of $S_n$ in the rank modulation scheme
represent "new" logical levels of the flash memory. The codewords in
the Gray code provide the order of these levels which should be implemented
in various algorithms with the rank modulation scheme.
Usually, a Gray code is just a simple cycle in a graph, in which the edges are defined between
vertices with distance one in a given metric. Two adjacent
vertices in the graph represent on one hand two elements
whose distance is one by the given metric; and on
the other hand a move from a vertex to a vertex implied
by an operation defined by the metric. A snake-in-the-box code
is a Gray code in which two elements in the code are not
adjacent in the graph, unless they are consecutive in the
code. Such a Gray code can detect a single error in a codeword.
Snake-in-the-box codes were mainly discussed in the context of
the Hamming scheme, e.g.~\cite{AbKa88}.

In the rank modulation scheme the Gray code is defined slightly
different since the operation is not defined by a metric.
The permutation is defined by the
order of the charge levels, from the highest one to the lowest one.
From a given ranking of the charge levels, which defines a permutation,
the next ranking is obtained by raising
the charge level of one of the cells to be the highest level.
This operation, called "push-to-the-top", is used in the rank modulation scheme.
For example, the charge levels $(c_1,c_2,c_3,c_4)=(5,1,3,4)$ are represented by the codeword $[1,4,3,2]$,
and by applying push-to-the-top operation on the second cell which has the lowest charge level,
we have, for example, the charge levels $(c_1,c_2,c_3,c_4)=(5,6,3,4)$ which are represented by the codeword $[2,1,4,3]$.
Hence, the permutation $\pi_2$ can follow the permutation $\pi_1$
if $\pi_2$ is obtained from $\pi_1$ by applying a push-to-the-top operation on $\pi_1$.
Therefore, the related graph is directed with an outgoing edge from the vertex
which represents $\pi_1$ into the vertex which represents $\pi_2$.
On the other hand, one possible metric for the scheme is
the Kendall's $\tau$-metric.
A Gray code
(and a snake-in-the-box code as a special case)
related to the rank modulation scheme is a directed simple cycle in the graph.
In a snake-in-the-box code, related to this scheme, there is another requirement that
the Kendall's $\tau$-distance between any two codewords
is at least two, including consecutive codewords.
For example,
$C=([1,2,3,4],\ [4,1,2,3],\ [2,4,1,3],\ [3,2,4,1],\ [4,3,2,1], \newline [1,4,3,2],\ [3,1,4,2],\ [2,3,1,4])$
is a snake-in-the-box code in $S_4$
obtained by applying a push-to-the-top operation on the lowest cell at each time.
The Kendall's $\tau$-distance between any two permutations in $C$ is at least~$2$.

One of the most important problems in the research on
snake-in-the-box codes is to construct the
largest possible code for the given graph. In a snake-in-the-box
code for the rank modulation scheme
we would like to find such a code with the largest number of permutations.
In a recent paper by Yehezkeally and Schwartz~\cite{YeSc12}, the authors
constructed a snake-in-the-box code of length
${M_{2n+1}=(2n+1)(2n-1) M_{2n-1}}$ for permutations of $S_{2n+1}$,
from a snake of length $M_{2n-1}$ for permutations of~$S_{2n-1}$.
We will improve on this result by constructing a snake of length
$M_{2n+1}=((2n+1)2n-1) M_{2n-1}$ for permutations of $S_{2n+1}$,
from a snake of length $M_{2n-1}$ for permutations of~$S_{2n-1}$.
Thus, we have
$\lim\limits_{n\to \infty} \frac{M_{2n+1}}{S_{2n+1}}\approx 0.4338$,
improving on the previous known ratio of
$\lim\limits_{n\to \infty} \frac{1}{\sqrt{\pi n}}$ \cite{YeSc12}.
For these constructions of snake-in-the-box codes we
need an initial snake-in-the-box code and the largest one known
to start both constructions is a snake of length 57 for
permutations of $S_5$. We also propose a direct construction
to form a snake of length
$\frac{(2n+1)!}{2} -2n+1$ for permutations
of $S_{2n+1}$. The direct construction was applied successfully for
$S_7$ and $S_9$. This implies better initial condition for the
recursive constructions, and the ratio $\lim\limits_{n\to \infty} \frac{M_{2n+1}}{S_{2n+1}}\approx 0.4743$.

The rest of this paper is organized as follows.
In Section~\ref{sec:pre} we will define the basic concepts
of Gray codes in the rank modulation
scheme, the push-to-the-top operation,
and the Kendall's $\tau$-metric required in this paper.
In Section~\ref{sec:main} we present the main ideas
and a framework for constructions of snake-in-the-box codes.
In Section~\ref{sec:recursive} we present a recursive construction
based on the given framework. This construction is used to obtain
snake-in-the-box codes longer than the ones known before.
In~Section~\ref{sec:direct}, based on the framework,
we present an idea for a
direct construction based on necklaces.
The construction is used to obtain snake-in-the-box codes
of length $\frac{(2n+1)!}{2} -2n+1$ in $S_{2n+1}$,
which we believe are optimal. The construction was
applied successfully on $S_7$ and on $S_9$,
and we conjecture that it can be applied
on $S_n$ for any odd $n > 6$.
Conclusions and problems for future research are
presented in Section~\ref{sec:conclude}.

\section{Preliminaries}
\label{sec:pre}

In this section we will repeat some notations defined and mentioned
in~\cite{YeSc12}, and we also present some other definitions.

Let $[n]\triangleq\{1, 2, \ldots , n\}$
and let $\pi= [a_1, a_2, \ldots , a_n]$ be a permutation
over $[n]$, i.e., a permutation in $S_n$, such that for each $i\in [n]$ we have that $\pi(i)=a_i$.

Given a set $\cS$ and a subset of transformations
$T\subseteq \{f|f:\cS\to \cS\}$, a \emph{Gray code}
over $\cS$ of size $M$, using transitions from
$T$, is a sequence $C=(c_0, c_1, \ldots , c_{M-1})$ of
$M$ distinct elements from $\cS$, called \emph{codewords}, such that for each
$j\in[M-1]$ there exists a $t\in T$ for which $c_j=t(c_{j-1})$.
The Gray code is called \emph{complete} if $M=|\cS|$,
and \emph{cyclic} if there exists $t\in T$ such that $c_0=t(c_{M-1})$.
Throughout this paper we will consider only cyclic Gray codes.

In the context of rank modulation for flash memories,
$\cS=S_n$ and the set of transformations $T$ comprises of push-to-the-top
operations. We denote by $t_i$ the \emph{push-to-the-top}
operation on index $i$, $2 \leq i \leq n$, defined by
\begin{align*}
t_i(&[a_1,\ldots,a_{i-1},a_i,a_{i+1},\ldots,a_n])=\\
&[a_i,a_1,\ldots,a_{i-1},a_{i+1},\ldots,a_n].
\end{align*}
and a \emph{p-transition} will be an abbreviated notation for a push-to-the-top operation.

A sequence of p-transitions will be called a \emph{transitions sequence}.
A permutation $\pi_0$ and a transitions sequence
$t_1 , t_2, \ldots t_\ell$ define a sequence of permutations
$\pi_0 , \pi_1 , \pi_2 ,\ldots , \pi_{\ell-1} ,\pi_\ell$,
where $\pi_i = t_i (\pi_{i-1})$, for each $i$, $1 \leq i \leq \ell$. This
sequence is a cyclic Gray code, if $\pi_\ell = \pi_0$
and for each $0 \leq i < j < \ell$, $\pi_i \neq \pi_j$.
In the sequel the word cyclic will be omitted

Given a permutation $\pi=[a_1,a_2,\ldots,a_n] \in S_n$,
an \emph{adjacent transposition} is an exchange of two distinct adjacent elements
$a_i,a_{i+1}$, in $\pi$, for some $1\leq i\leq n-1$.
The result of such an adjacent transposition is the permutation
$[a_1,\ldots,a_{i-1},a_{i+1},a_i,a_{i+2},\ldots,a_n]$.
The \emph{Kendall's} $\tau$-\emph{distance}~\cite{KeGi90} between two permutations
$\pi_1,\pi_2 \in S_n$ denoted by $d_K (\pi_1,\pi_2)$
is the minimum number of adjacent transpositions
required to obtain the permutation $\pi_2$ from the permutation $\pi_1$.
A \emph{snake-in-the-box code} is a Gray code in which for each two
permutations $\pi_1$ and $\pi_2$ in the code we have
$d_K(\pi_1,\pi_2) \geq 2$.
Hence, a snake-in-the-box code is a Gray code capable of detecting
one Kendall's $\tau$-error. We will call such a snake-in-the-box
code a $\mK$\emph{-snake}.
We further denote by $(n,M,\mK)$\emph{-snake} a $\mK$-snake of size $M$
with permutations from $S_n$. A $\mK$-snake can be
represented in two different equivalent ways:
\begin{itemize}
{\setlength\itemindent{-9pt}
\item the sequence of codewords (permutations),
\item the transitions sequence along with the first permutation.}
\end{itemize}

Let $\cT$ be a transitions sequence and let $\pi$ be a permutation in $S_n$.
If a $\mK$-snake is obtained by applying $\cT$ on $\pi$
then a $\mK$-snake will be obtained by using any other
permutation from $S_n$ instead of $\pi$. This is a simple observation
from the fact that $t( \pi_2 ( \pi_1 )) = \pi_2 (t(\pi_1))$,
where $t$ is a p-transition and $\pi_2 (\pi_1)$ refers to applying the
permutation $\pi_2 \in S_n$ on the permutation $\pi_1 \in S_n$.
In other words applying $\cT$ on a different permutation
just permute the symbols, by a fixed given permutation,
in all the resulting permutations when $\cT$ is applied on $\pi$. Therefore,
such a transitions sequence~$\cT$ will be called an \emph{S-skeleton}.

For a transitions sequence ${\sigma=t_{k_1}, t_{k_2}, \ldots t_{k_{\ell}}}$
and a permutation $\pi \in S_n$,
we denote by $\sigma\left(\pi\right)$, the permutation obtained by applying
the sequence of p-transitions in $\sigma$ on $\pi$,
i.e., $t_{k_1}$ is applied on $\pi$, $t_{k_2}$ is applied on $t_{k_1} (\pi)$, and so on.
In other words, $\sigma\left(\pi\right)= (t_{k_1}\circ t_{k_2}\circ \ldots \circ t_{k_{\ell}})(\pi)=
t_{k_{\ell}}\left(t_{k_{l-1}}\left(\ldots t_{k_2}\left(t_{k_1}\left(\pi\right)\right)\right)\right)$.
Let $\sigma_1,\sigma_2$ be two transitions sequences.
We say that $\sigma_1$ and  $\sigma_2$ are \emph{matching sequences},
and denote it by $\sigma_1 \leftrightsquigarrow \sigma_2$,
if for each ${\pi\in S_n}$ we have $\sigma_1(\pi)=\sigma_2(\pi)$.

In~\cite{YeSc12} it was proved that a Gray code with permutations from $S_n$ using only
p-transitions on odd indices is a $\mK$-snake. By starting with an even permutation and using only
p-transitions on odd indices we get a sequence
of even permutations, i.e., a subset of the $A_n$, the alternating group of order $n$.
This observation saves us the need to check whether a Gray code is
in fact a $\mK$-snake, at the cost of restricting the
permutations in the $\mK$-snake to the set of even permutations.
However, the following assertions were also proved in~\cite{YeSc12}.
\begin{itemize}
\item If $C$ is an $(n,M,\mK)$-snake then $M\leq \frac{{|S_n|}}{2}$.
\item If $C$ is an $(n,M,\mK)$-snake which contains a p-transition
on an even index then ${M\leq{\frac{{|S_n|}}{2}-\frac{1}{n-1}\binom{\lfloor n/2 \rfloor -1}{2}}}$.
\end{itemize}
This motivates not to use p-transitions on even indices.
Since we will use only p-transitions on odd indices,
we will describe our constructions only for even permutations with odd length.

\section{Framework for Constructions of $\mK$-Snakes}
\label{sec:main}

In this section we present a framework for constructing $\mK$-snakes in $S_{2n+1}$.
Our snakes will contain only even permutations. We start by partitioning
the set of even permutations of $S_{2n+1}$ into classes. Next, we describe
how to merge $\mK$-snakes of different classes into one $\mK$-snake.
We conclude this section by describing how to combine most of these classes
by using a hypergraph whose vertices represent the classes and whose
edges represent the classes that can be merge together in one step.

We present two constructions for a $(2n+1,M_{2n+1},\mK)$-snake,
$C_{2n+1}$, one recursive and one direct. In this section
we present the framework for these constructions.
First, the permutations of $A_{2n+1}$,
the set of even permutations from~$S_{2n+1}$, are partitioned into classes,
where each class induces one $\mK$-snake which contains permutations only from the class.
All these snakes have the same S-skeleton.
Let $L_{2n+1}$ be the set of all the classes.

The construction of $C_{2n+1}$ from the $\mK$-snakes of $L_{2n+1}$
proceeds by a sequence of joins, where at each step
we have a main $\mK$-snake, and
two $\mK$-snakes from the remaining $\mK$-snakes
of $L_{2n+1}$ are joined to the current main $\mK$-snake.
A join is performed by replacing one transition
in the main $\mK$-snake with a matching sequence.

In order to join the $\mK$-snakes we need the following lemmas,
for which the first can be easily verified. In the sequel, let
${\sigma^{k}\triangleq\underbrace{\sigma\circ \sigma \circ \ldots \circ \sigma}_{k \, times}}$,
i.e., performing the transitions sequence $\sigma$, $k$ times.
\begin{lemma}
\label{lem:helpLem1}
If $\alpha,\beta \in S_n$ then $\beta{=}t_{i}(\alpha)$ if and only if $\alpha{=}t_i^{i-1}(\beta)$.
\end{lemma}
\begin{lemma}
\label{lem:mainLem}
If ${i \in [n-2]}$ then $t_i \leftrightsquigarrow t_{i+2}\circ (t_i^{i-1} \circ t_{i+2})^{2}$.
\end{lemma}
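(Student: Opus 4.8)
The plan is to convert the claimed matching of transitions sequences into a single identity of permutations and then verify that identity by a two‑line computation. The key observation is that every $p$-transition acts on $S_n$ as a \emph{right translation}: writing a permutation $\pi=[a_1,\dots,a_n]$ as the function $p\mapsto a_p$, one checks straight from the definition of $t_j$ that $t_j(\pi)=\pi\circ c_j$, where $c_j$ is the fixed $j$-cycle $(1,\,j,\,j{-}1,\dots,2)$ on the positions $\{1,\dots,n\}$, i.e. $c_j(1)=j$, $c_j(p)=p-1$ for $2\le p\le j$, and $c_j(p)=p$ for $p>j$. Consequently a transitions sequence $\sigma=t_{k_1}\circ\cdots\circ t_{k_\ell}$ (in which $t_{k_1}$ is applied first) satisfies $\sigma(\pi)=\pi\circ\bigl(c_{k_1}c_{k_2}\cdots c_{k_\ell}\bigr)$ for every $\pi$, and therefore $\sigma_1\leftrightsquigarrow\sigma_2$ holds if and only if the two corresponding products of $c_k$'s are equal as elements of $S_n$.

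Next I would rewrite both sides of the asserted relation in this language. By Lemma~\ref{lem:helpLem1} the sequence $t_i^{i-1}$ is the inverse of $t_i$, so it corresponds to $c_i^{i-1}=c_i^{-1}$, which is the $i$-cycle $(1,2,\dots,i)$. Hence $t_i$ corresponds to $c_i$, while $t_{i+2}\circ(t_i^{i-1}\circ t_{i+2})^2$ corresponds to $c_{i+2}\,(c_i^{-1}c_{i+2})^2$. Thus Lemma~\ref{lem:mainLem} is equivalent to
\[
c_{i+2}\,(c_i^{-1}c_{i+2})^2=c_i ,
\]
and, multiplying on the right by $c_i^{-1}$, this is in turn equivalent to $(c_{i+2}\,c_i^{-1})^3=e$.

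Finally I would compute $g\triangleq c_{i+2}\,c_i^{-1}$ pointwise on $\{1,\dots,i+2\}$ (every larger point is fixed by both cycles, hence by $g$). Applying $c_i^{-1}$ and then $c_{i+2}$ one finds that $g$ fixes $1,2,\dots,i-1$ and cycles $i\mapsto i+2\mapsto i+1\mapsto i$; that is, $g$ is the $3$-cycle $(i,\,i{+}2,\,i{+}1)$, so $g^3=e$, which completes the proof. The degenerate value $i=1$ (where $t_1$ and $t_1^{0}$ are the identity) is handled by the very same computation, since then $g=c_3=(1,3,2)$ is itself a $3$-cycle.

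There is no genuine obstacle here; the whole argument is bookkeeping. The only points needing care are the composition convention (it is precisely because $t_{k_1}$ in $t_{k_1}\circ\cdots\circ t_{k_\ell}$ is applied first that the product $c_{k_1}\cdots c_{k_\ell}$ appears in that order on the right) and the orientation of the cycle $c_j$; I would double‑check both by verifying the statement by hand for $i=2$, i.e. that $t_4\circ t_2\circ t_4\circ t_2\circ t_4$ and $t_2$ agree on $[a_1,a_2,a_3,a_4]$. As a completely elementary alternative, one can skip the group‑theoretic reformulation altogether: apply the five $p$-transitions on the right‑hand side to an arbitrary $\pi=[a_1,\dots,a_n]$ in order and observe that positions $i+3,\dots,n$ never change while the first $i+2$ entries become $[a_i,a_1,\dots,a_{i-1},a_{i+1},a_{i+2}]$, which is exactly $t_i(\pi)$.
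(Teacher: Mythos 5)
Your proof is correct, but it follows a genuinely different route from the paper's. The paper proves the lemma by brute force: it applies the five transitions $t_{i+2},t_i^{i-1},t_{i+2},t_i^{i-1},t_{i+2}$ to a generic $\alpha=[a_1,\dots,a_n]$ and displays the intermediate permutations until $t_i(\alpha)$ appears. You instead note that each $t_j$ acts by right translation $\pi\mapsto\pi\circ c_j$ with $c_j$ a fixed $j$-cycle on positions, so matching of transition sequences is equivalent to equality of the corresponding products in $S_n$; the lemma then reduces to $(c_{i+2}c_i^{-1})^3=e$, which holds because $c_{i+2}c_i^{-1}$ is the $3$-cycle $(i,\,i{+}2,\,i{+}1)$. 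I checked the bookkeeping (the identification $t_j(\pi)=\pi\circ c_j$, the order of the factors, $c_i^{i-1}=c_i^{-1}$, and the pointwise computation of $c_{i+2}c_i^{-1}$) and it is consistent; your remark on $i=1$ also covers a degenerate case the paper leaves implicit, since $t_i$ is only defined there for $i\ge 2$. The algebraic reformulation buys something the paper's computation does not: it makes the generalization in Lemma~\ref{lem:mainLemExp} immediate, since for $|i-j|=k$ the permutation $c_jc_i^{-1}$ is a $(k+1)$-cycle and $(c_jc_i^{-1})^{k+1}=e$ yields $t_i\leftrightsquigarrow t_j\circ(t_i^{i-1}\circ t_j)^{k}$. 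The paper's direct calculation, in turn, exhibits the intermediate permutations explicitly, which is exactly what is reused in the merging argument of Lemma~\ref{prop:mergeClasses1}; and your closing ``elementary alternative'' is essentially that proof.
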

\begin{proof}
Let ${\alpha=[a_1, a_2, \ldots ,a_i, a_{i+1}, a_{i+2}, \ldots , a_n]}$ be a permutation over [n].
%\begin{align*}
%\begin{tabular}{lllllllll}
%$t_{i+2}(\alpha)$  &  $=$  & \\ \hspace*{\parindent}$[a_{i+2}, a_1, \ldots , a_i, a_{i+1}, a_{i+3},$  &  $\ldots,a_n]$,\tabularnewline
%$t_i^{i-1}(t_{i+2}(\alpha))$  &  $=$   & \\ \hspace*{\parindent} $[a_1, a_2, \ldots , a_{i-1}, a_{i+2}, a_i, a_{i+1}, a_{i+3},$  &  $\ldots,a_n]$,\tabularnewline
%$t_{i+2}(t_i^{i-1}(t_{i+2}(\alpha)))$  & $=$  & \\ \hspace*{\parindent} $[a_{i+1}, a_1, a_2, \ldots , a_{i-1},  a_{i+2}, a_i, a_{i+3},$  &  $\ldots,a_n]$,\tabularnewline
%$t_i^{i-1}(t_{i+2}(t_i^{i-1}(t_{i+2}(\alpha))))$  &  $=$  & \\ \hspace*{\parindent} $[a_1, a_2, \ldots ,a_{i-1}, a_{i+1}, a_{i+2}, a_i, a_{i+3},$  &  $\ldots,a_n]$,\tabularnewline
%\intertext{and hence we have,}
%$t_{i+2}(t_i^{i-1}(t_{i+2}(t_i^{i-1}(t_{i+2}(\alpha)))))$  &  $=$ & \\ \hspace*{\parindent} $[a_i, a_1, \ldots , a_{i-1}, a_{i+1}, a_{i+2},$  &  $\ldots,a_n]$ \tabularnewline
%& $=\ t_i(\alpha)$.
%\end{tabular}
%\end{align*}

\begin{align*}
\begin{tabular}{lllllllll}
$t_{i+2}(\alpha)$  & \\
\hspace*{\parindent} $=[a_{i+2}, a_1, \ldots , a_i, a_{i+1}, a_{i+3},$  &  $\ldots,a_n]$,\tabularnewline
$t_i^{i-1}(t_{i+2}(\alpha))$ & \\
\hspace*{\parindent} $=[a_1, a_2, \ldots , a_{i-1}, a_{i+2}, a_i, a_{i+1}, a_{i+3},$  &  $\ldots,a_n]$,\tabularnewline
$t_{i+2}(t_i^{i-1}(t_{i+2}(\alpha)))$  & \\
\hspace*{\parindent} $=[a_{i+1}, a_1, a_2, \ldots , a_{i-1},  a_{i+2}, a_i, a_{i+3},$  &  $\ldots,a_n]$,\tabularnewline
$t_i^{i-1}(t_{i+2}(t_i^{i-1}(t_{i+2}(\alpha))))$  & \\
\hspace*{\parindent} $=[a_1, a_2, \ldots ,a_{i-1}, a_{i+1}, a_{i+2}, a_i, a_{i+3},$  &  $\ldots,a_n]$,\tabularnewline
\intertext{and hence we have,}
$t_{i+2}(t_i^{i-1}(t_{i+2}(t_i^{i-1}(t_{i+2}(\alpha)))))$ & \\
\hspace*{\parindent} $=[a_i, a_1, \ldots , a_{i-1}, a_{i+1}, a_{i+2},$  &  $\ldots,a_n]$ \tabularnewline
\hspace*{\parindent} $=t_i(\alpha)$.
\end{tabular}
\end{align*}
\end{proof}
\vspace{-4pt}
\begin{cor}
\label{cor:mainLem}
If $\pi \in S_{2n+1}$ then $t_{2n-1}(\pi)=t_{2n+1}\left(t_{2n-1}^{2n-2}\left(t_{2n+1}\left(t_{2n-1}^{2n-2}\left(t_{2n+1}(\pi)\right)\right)\right)\right)$.
\end{cor}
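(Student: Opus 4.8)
The plan is to derive the Corollary as the special case of Lemma~\ref{lem:mainLem} in which the ambient symmetric group is $S_{2n+1}$ and the transition index is the largest one for which the lemma applies. Concretely, in the statement of Lemma~\ref{lem:mainLem} I would replace the generic dimension (there called $n$) by $2n+1$ and take the index $i=2n-1$; then $i+2=2n+1$ and $i-1=2n-2$, and the hypothesis $i\in[n-2]$ of the lemma reads $2n-1\in[(2n+1)-2]=[2n-1]$, which is satisfied. (In fact $i=2n-1$ is exactly the largest admissible value, since $i+2=2n+1$ must still be a legal push-to-the-top index on $S_{2n+1}$.) With these substitutions Lemma~\ref{lem:mainLem} yields the matching-sequence relation $t_{2n-1}\leftrightsquigarrow t_{2n+1}\circ\bigl(t_{2n-1}^{2n-2}\circ t_{2n+1}\bigr)^{2}$.

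It then remains only to rewrite this relation in the nested-function form that appears in the statement. By the definition of $\leftrightsquigarrow$, the relation above means $t_{2n-1}(\pi)=\bigl(t_{2n+1}\circ(t_{2n-1}^{2n-2}\circ t_{2n+1})^{2}\bigr)(\pi)$ for every $\pi\in S_{2n+1}$. I would expand the square as $(t_{2n-1}^{2n-2}\circ t_{2n+1})^{2}=t_{2n-1}^{2n-2}\circ t_{2n+1}\circ t_{2n-1}^{2n-2}\circ t_{2n+1}$ and then apply the left-to-right reading of $\circ$ fixed in Section~\ref{sec:pre}, namely $(\tau_1\circ\cdots\circ\tau_m)(\pi)=\tau_m(\cdots(\tau_1(\pi))\cdots)$. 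Under that reading the right-hand side becomes the nested application $t_{2n+1}\!\left(t_{2n-1}^{2n-2}\!\left(t_{2n+1}\!\left(t_{2n-1}^{2n-2}\!\left(t_{2n+1}(\pi)\right)\right)\right)\right)$, which is precisely the expression in the Corollary, completing the argument.

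Since Lemma~\ref{lem:mainLem} has already been established, the only points requiring attention here are purely bookkeeping: verifying the index constraint $i\in[n-2]$ at the boundary value $i=2n-1$, and correctly translating the compact composition notation $t_{i+2}\circ(t_i^{i-1}\circ t_{i+2})^2$ of the lemma into the fully parenthesised, inside-out form used in the Corollary. Both are exactly what the explicit symbolic computation inside the proof of Lemma~\ref{lem:mainLem} carries out for the permutation $\alpha$, so I do not anticipate any genuine obstacle.
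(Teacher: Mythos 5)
Your proposal is correct and is essentially the paper's own (implicit) argument: Corollary~\ref{cor:mainLem} is simply Lemma~\ref{lem:mainLem} specialized to $S_{2n+1}$ with $i=2n-1$, so that $i+2=2n+1$ and $i-1=2n-2$, and the hypothesis $2n-1\in[2n-1]$ holds at the boundary. Your translation of the composition notation into the nested form, using the paper's left-to-right convention for $\circ$, is also accurate, so nothing further is needed.
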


Lemma~\ref{lem:mainLem} can be generalized as follows
(the following lemma is given
for completeness, but it will not be used in the sequel, and hence
its proof is omitted).

\begin{lemma}
\label{lem:mainLemExp}
If ${i,j\in [n]}$ and ${|i-j|{=}k}$, then ${t_i \leftrightsquigarrow t_j\circ (t_i^{i-1} \circ t_j)^{k}}$.
\end{lemma}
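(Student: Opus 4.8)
The plan is to generalize the proof of Lemma~\ref{lem:mainLem} by induction on $k = |i-j|$. Without loss of generality assume $j > i$, so $j = i+k$; the case $j < i$ follows by applying Lemma~\ref{lem:helpLem1} to invert each p-transition, or symmetrically. The base case $k=1$ is exactly Lemma~\ref{lem:mainLem} (with the roles of $i$ and $i+1$ playing the part of $i$ and $i+2$ there — note that in Lemma~\ref{lem:mainLem} the gap is $2$, so one should first restate the intended claim consistently; I will treat $k$ as the gap and check that $k=1$ reduces to the computation already performed, up to re-indexing). The essential identity to isolate first is the single-step relation
\begin{align*}
t_i \leftrightsquigarrow t_{i+1} \circ t_i^{i-1} \circ t_{i+1} \circ t_i^{i-1} \circ t_{i+1},
\end{align*}
or more usefully its "peeled" form showing how conjugating $t_i$ by $t_{i+1}$ on the appropriate side produces $t_{i-1}$-type behavior shifted by one index.

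First I would set up the notation carefully: write $\alpha = [a_1,\dots,a_n]$ and track the positions of the block $a_i,\dots,a_j$ under the claimed composite transitions sequence, exactly as in the displayed table in the proof of Lemma~\ref{lem:mainLem}, but now with $j$ instead of $i+2$. The key observation is that applying $t_i^{i-1}$ (which, by Lemma~\ref{lem:helpLem1}, undoes $t_i$) after a $t_j$ has the effect of "rotating back" the first $i-1$ entries while leaving whatever was pushed to the front by $t_j$ sitting in position $i$. Iterating the pattern $(t_i^{i-1} \circ t_j)$ $k$ times then walks the element originally in position $i$ leftward/rightward across the gap one slot at a time, and the outer $t_j$'s bracket the whole thing so that the net effect on the remaining coordinates cancels. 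I expect the cleanest write-up is an inductive one: assume $t_i \leftrightsquigarrow t_{i+k-1}\circ (t_i^{i-1}\circ t_{i+k-1})^{k-1}$ has been established for gap $k-1$, and then show that substituting the gap-$1$ identity $t_{i+k-1} \leftrightsquigarrow t_{i+k}\circ (t_{i+k-1}^{i+k-2} \circ t_{i+k})^{1}$ — suitably combined with commutation facts about p-transitions on far-apart indices — collapses to the gap-$k$ formula. The commutation facts needed are that $t_a$ and $t_b$ "nearly commute" when $|a-b|$ is large in the precise sense recorded by matching sequences, and these can be verified by direct inspection of the position bookkeeping.

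The main obstacle will be the bookkeeping in the inductive step: unlike Lemma~\ref{lem:mainLem}, where everything can be laid out in a four-line table, here one must argue uniformly over the $k$ interleaved copies of $(t_i^{i-1}\circ t_j)$, and it is easy to be off by one in how many $t_i^{i-1}$-blocks appear or in which index each block should act on after the previous rotations. I would handle this by proving a clean auxiliary claim first — something like: for the permutation $\beta$ obtained from $\alpha$ by moving $a_i$ rightward past $a_{i+1},\dots,a_{j-1}$ into the slot just before $a_j$ (and leaving all other entries fixed), one has $\beta = (t_i^{i-1}\circ t_j)(\alpha)$ up to a controlled cyclic shift of a prefix, and then show the outer $t_j$'s exactly correct that prefix shift. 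Once that auxiliary claim is nailed down, the rest is a mechanical induction and the final identity $t_i \leftrightsquigarrow t_j\circ (t_i^{i-1}\circ t_j)^k$ drops out. Since the paper itself omits this proof as routine, I would keep the write-up short: state the auxiliary claim, verify it by the same kind of explicit position-tracking used in Lemma~\ref{lem:mainLem}, and then close by induction on $k$, remarking that $k=1$ is Lemma~\ref{lem:mainLem}.
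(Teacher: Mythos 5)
Your overall direction---redoing the explicit position-tracking of Lemma~\ref{lem:mainLem} for a general gap---is the natural one, and indeed the paper gives no proof at all for Lemma~\ref{lem:mainLemExp} (it is stated for completeness only), so there is nothing to compare against except the obvious intended computation. However, as written your sketch contains a concrete false statement. The identity you isolate as the ``essential single-step relation,'' $t_i \leftrightsquigarrow t_{i+1}\circ t_i^{i-1}\circ t_{i+1}\circ t_i^{i-1}\circ t_{i+1}$, does not hold: the sequence is a palindrome, so the composition convention is immaterial, and tracking $\alpha=[a_1,\ldots,a_n]$ shows that after $t_{i+1},\,t_i^{i-1},\,t_{i+1}$ one already has $t_i(\alpha)$, after the next $t_i^{i-1}$ one is back at $\alpha$, and the final $t_{i+1}$ yields $t_{i+1}(\alpha)$, not $t_i(\alpha)$. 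The correct gap-one instance is $t_i\leftrightsquigarrow t_{i+1}\circ\bigl(t_i^{i-1}\circ t_{i+1}\bigr)^{1}$, with a single repeated block: the exponent in the lemma equals the gap $k$, and Lemma~\ref{lem:mainLem} is the case $k=2$, not a re-indexed base case $k=1$. (Notice that inside your inductive step you quote the gap-one identity with exponent $1$, which is correct and contradicts the five-term identity you start from.) This gap/exponent conflation is not cosmetic, because the base case and the building block of your induction are exactly where it sits.

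The inductive step is also not established. Substituting $t_{i+k-1}\leftrightsquigarrow t_{i+k}\circ t_{i+k-1}^{i+k-2}\circ t_{i+k}$ for each of the $k$ occurrences of $t_{i+k-1}$ in the gap-$(k-1)$ formula produces blocks $t_{i+k-1}^{i+k-2}$ that do not appear in the target $t_{i+k}\circ\bigl(t_i^{i-1}\circ t_{i+k}\bigr)^{k}$, and the appeal to far-apart p-transitions ``nearly commuting'' is not a fact available in any stated form ($t_a$ and $t_b$ are not matching sequences for $a\ne b$, and no commutation lemma exists in the paper), so the claimed collapse is exactly the part that needs an argument. The repair is to abandon the substitution induction and prove the lemma in one pass, which is presumably the omitted proof: for $j=i+k$ show by induction on $m$, $1\le m\le k$, that after the first $2m$ transitions of $t_j\circ\bigl(t_i^{i-1}\circ t_j\bigr)^{k}$ the permutation is $[a_1,\ldots,a_{i-1},\,a_{j-m+1},\ldots,a_j,\,a_i,a_{i+1},\ldots,a_{j-m},\,a_{j+1},\ldots,a_n]$; at $m=k$ the element $a_i$ sits in position $j$, so the final $t_j$ returns $t_i(\alpha)$. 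Finally, the case $j<i$ is true but is not automatic ``by symmetry'' or via Lemma~\ref{lem:helpLem1}; it needs its own (equally routine) bookkeeping, in which the pushed elements are $a_j,a_{j+1},\ldots$ in ascending order rather than $a_j,a_{j-1},\ldots$ in descending order.
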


The partition of $A_{2n+1}$ into the set of classes $L_{2n+1}$
should satisfy the following properties:
\begin{itemize}
\item[(P1)] The last two ordered elements of two permutations in same class are equal.
\item[(P2)] Any two permutations which differ only by a cyclic shift of the first $2n-1$ elements, belong to the same class.
\end{itemize}

\begin{cor}
Let $\pi$ be a permutation in $A_{2n+1}$.
\begin{itemize}
\item $\pi$ and $t_{2n+1}(\pi)$ belong to different classes in $L_{2n+1}$.
\item $\pi$ and $t_{2n-1}(\pi)$ belong to the same class in $L_{2n+1}$.
\end{itemize}
\end{cor}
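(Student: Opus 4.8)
The plan is to use only the two defining properties (P1) and (P2) of the partition $L_{2n+1}$; neither the internal structure of the individual $\mK$-snakes nor Lemma~\ref{lem:mainLem} is needed. First I would fix $\pi=[a_1,a_2,\ldots,a_{2n+1}]\in A_{2n+1}$ and write out the two images directly from the definition of the push-to-the-top operation:
\[
t_{2n+1}(\pi)=[a_{2n+1},a_1,a_2,\ldots,a_{2n}],\qquad
t_{2n-1}(\pi)=[a_{2n-1},a_1,\ldots,a_{2n-2},a_{2n},a_{2n+1}].
\]

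For the first item I would compare the last two ordered elements of the two permutations. In $\pi$ these are $(a_{2n},a_{2n+1})$, whereas in $t_{2n+1}(\pi)$ the entries in positions $2n$ and $2n+1$ are $a_{2n-1}$ and $a_{2n}$, i.e. the ordered pair $(a_{2n-1},a_{2n})$. Since the entries of a permutation are pairwise distinct, $a_{2n-1}\neq a_{2n}$ and $a_{2n}\neq a_{2n+1}$, so these two ordered pairs are different. By property (P1), two permutations lying in the same class of $L_{2n+1}$ must have the same last two ordered elements; hence $\pi$ and $t_{2n+1}(\pi)$ belong to different classes.

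For the second item I would observe that $t_{2n-1}(\pi)$ leaves the last two coordinates $a_{2n},a_{2n+1}$ untouched and replaces the first $2n-1$ coordinates $(a_1,\ldots,a_{2n-1})$ by the one-step cyclic shift $(a_{2n-1},a_1,\ldots,a_{2n-2})$. Thus $\pi$ and $t_{2n-1}(\pi)$ differ only by a cyclic shift of the first $2n-1$ elements, so by property (P2) they lie in the same class of $L_{2n+1}$.

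The only point requiring any care is the index bookkeeping: one must check from the displayed formula for $t_i$ in Section~\ref{sec:pre} that applying $t_{2n-1}$ affects precisely the first $2n-1$ positions and keeps the last two fixed, and that applying $t_{2n+1}$ sends the old entry $a_{2n-1}$ into position $2n$. Both are immediate from the definition, so there is no genuine obstacle; the corollary is essentially a direct reading of (P1) and (P2).
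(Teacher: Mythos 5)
Your proof is correct and follows exactly the intended route: the paper states this corollary without proof as an immediate consequence of (P1) and (P2), and your explicit computation of $t_{2n+1}(\pi)$ and $t_{2n-1}(\pi)$ is precisely the verification the authors had in mind. The only (negligible) point left implicit is that $t_{2n+1}(\pi)$ and $t_{2n-1}(\pi)$ are again even permutations, so they indeed lie in classes of $L_{2n+1}$; this is already noted in the paper since p-transitions on odd indices preserve parity.
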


We continue now with the description of the method to join the
$\mK$-snakes of $L_{2n+1}$ into $C_{2n+1}$.
In the rest of the paper, $A_{2n+1}$ is partitioned into classes according to the last two ordered elements in the permutations.
Let $[x,y]$ denote the class of $A_{2n+1}$ in which the last ordered pair in the permutations is $(x,y)$.
Let $\cT$ be the S-skeleton of the $\mK$-snakes in $L_{2n+1}$.
Let $C_{\cT}^{\pi}$ be a $\mK$-snake for which
$\cT$ is its transitions sequence, and $\pi$ is its first permutation.
If $\pi$ belongs to the class $[x,y]$,
we say that $C_{\cT}^{\pi}$ represents the class $[x,y]$.
Note that all the permutations in $C_{\cT}^{\pi}$ belong to the same class.

The transitions sequence $\cT$
should satisfy the following properties (these properties are needed in order to make the required joins of cycles):
\begin{itemize}
\item[(P3)] $t_{2n-1}$ is the last transition in $\cT$.
\item[(P4)] Given a permutation $\pi=[a_1,\ldots, a_{2n},a_{2n+1}]$,
for each $x\in [2n+1]\setminus \{a_{2n},a_{2n+1}\}$
there exists a permutation $\pi' \in C_{\cT}^{\pi}$
whose last ordered three elements are $(x,a_{2n},a_{2n+1})$.
\end{itemize}

\begin{cor}
\label{cor:mergeCyclesCondition}
For each class $[x,y]$, a permutation $\pi \in [x,y]$, and $z\in [2n+1]\setminus \{x,y\}$,
there exists a permutation $\pi' \in C_{\cT}^{\pi}$ whose last ordered three elements are $(z,x,y)$,
followed by the permutation $t_{2n-1}(\pi')$.
\end{cor}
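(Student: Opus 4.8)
\textit{Proof strategy.} The plan is to combine properties (P3) and (P4) with the elementary remark that, inside a single class, the symbol sitting in position $2n-1$ is moved only by the p-transition $t_{2n-1}$, and that $t_{2n-1}$ always moves it.

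First I would fix the setup. Since $\pi\in[x,y]$, every codeword of $C_{\cT}^{\pi}$ lies in $[x,y]$, hence carries $x$ in position $2n$ and $y$ in position $2n+1$. Because $t_i$ rewrites exactly positions $1,\dots,i$ and fixes positions $i+1,\dots,2n+1$, any p-transition appearing in $\cT$ has index at most $2n-1$ (an index $2n$ or $2n+1$ would disturb the last two entries, i.e.\ leave the class), and among p-transitions of index at most $2n-1$ only $t_{2n-1}$ acts on position $2n-1$. Moreover $t_{2n-1}$ always changes the entry in position $2n-1$: if $\rho=[c_1,\dots,c_{2n-2},c_{2n-1},x,y]$ then $t_{2n-1}(\rho)=[c_{2n-1},c_1,\dots,c_{2n-2},x,y]$, whose position-$(2n-1)$ entry is $c_{2n-2}\neq c_{2n-1}$. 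By (P3), $\cT$ contains at least one occurrence of $t_{2n-1}$.

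Now fix $z\in[2n+1]\setminus\{x,y\}$. Running along the cyclic sequence $C_{\cT}^{\pi}$, the entry in position $2n-1$ is constant on each maximal run of consecutive codewords lying between two successive applications of $t_{2n-1}$, and it changes exactly at those applications; since there is at least one such application and each one changes the entry, the entry in position $2n-1$ is not constant along the cycle, so these runs are proper nonempty arcs. By property (P4), applied with $a_{2n}=x$ and $a_{2n+1}=y$, some codeword of $C_{\cT}^{\pi}$ has last three ordered entries $(z,x,y)$, i.e.\ has $z$ in position $2n-1$; let $\pi'$ be the last codeword of a run on which this entry equals $z$. Then $\pi'$ has last three ordered entries $(z,x,y)$, and the codeword immediately following $\pi'$ has a different entry in position $2n-1$; by the remark the only transition that can produce this change is $t_{2n-1}$, so the successor of $\pi'$ in $C_{\cT}^{\pi}$ is precisely $t_{2n-1}(\pi')$, which is the assertion.

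The only steps needing care are the two structural observations in the second paragraph — that $\cT$ cannot use an index $\ge 2n$, and that no p-transition of index $<2n-1$ touches position $2n-1$ — together with the verification that $t_{2n-1}$ genuinely changes position $2n-1$, from which properness of the arcs follows; once these are in place the arc argument is immediate and no computation is required. As a direct illustration one may note that for $z=a_1$, the first entry of $\pi=[a_1,\dots,a_{2n-1},x,y]$, the codeword $t_{2n-1}^{2n-2}(\pi)$, which by Lemma~\ref{lem:helpLem1} and (P3) is the codeword just before $\pi$, already has last three ordered entries $(a_1,x,y)$ and is followed by $\pi=t_{2n-1}\bigl(t_{2n-1}^{2n-2}(\pi)\bigr)$.
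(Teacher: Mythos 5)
Your proof is correct and follows the route the paper intends: the corollary is stated there without proof as an immediate consequence of (P3), (P4), and the fact that a class-preserving snake can only use p-transitions of index at most $2n-1$, so that position $2n-1$ changes exactly at the occurrences of $t_{2n-1}$. Your maximal-run argument (pick the last codeword of a run with $z$ in position $2n-1$) simply makes that implicit reasoning explicit, and it is sound.
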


\begin{lemma}
\label{prop:mergeClasses1}
Let $C$ be a $\mK$-snake which doesn't contain any permutation from the classes $[y,z]$ or $[z,x]$, let
$\pi=[a_1, a_2, \ldots, a_{2n-2},z, \mathbf{x,y}]$
be a permutation in $C$ followed by $t_{2n-1}$, and let
$\sigma$ be a transitions sequence such that $\cT = \sigma \circ t_{2n-1}$.
Then replacing this $t_{2n-1}$ transition in $C$, with
\begin{equation*}
t_{2n+1}\circ \sigma \circ t_{2n+1}\circ \sigma \circ t_{2n+1} ,
\end{equation*}
joins two $\mK$-snakes representing the classes ${[y, z]}$ and ${[z, x]}$ into $C$ (after $\pi$).
\end{lemma}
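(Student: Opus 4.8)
The claim is essentially an instance of Corollary~\ref{cor:mainLem} combined with a bookkeeping argument about which classes the intermediate permutations live in. I would proceed in three stages: first, verify that the proposed replacement is a valid matching-sequence substitution (so it still produces a $\mK$-snake on the same vertex set plus new vertices); second, identify the two disjoint sub-cycles that get stitched in; third, check that no permutation is repeated, i.e. that the two new cycles genuinely belong to the classes $[y,z]$ and $[z,x]$, which are disjoint from $C$ by hypothesis.

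\medskip

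\textbf{Step 1: the substitution is sound.} By hypothesis $\cT=\sigma\circ t_{2n-1}$, so a single traversal of $\cT$ applied to a permutation ending in $(z,x,y)$ first runs $\sigma$ and then $t_{2n-1}$. I would start from the permutation $\pi=[a_1,\dots,a_{2n-2},z,\mathbf{x,y}]$ and unwind the proposed replacement $t_{2n+1}\circ\sigma\circ t_{2n+1}\circ\sigma\circ t_{2n+1}$. The key observation, which I would spell out, is that because $t_{2n-1}$ is the \emph{last} transition of $\cT$ (property~(P3)), we have $\sigma(\rho)=t_{2n-1}^{2n-2}(\cT(\rho))$ for any $\rho$ by Lemma~\ref{lem:helpLem1}; hence applying $\sigma$ here has the same effect on the relevant coordinates as applying $t_{2n-1}^{2n-2}$ after a full skeleton pass. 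Feeding this into Corollary~\ref{cor:mainLem} (with the roles of $t_{2n-1}$ and $t_{2n+1}$) gives exactly $t_{2n+1}\circ t_{2n-1}^{2n-2}\circ t_{2n+1}\circ t_{2n-1}^{2n-2}\circ t_{2n+1}\leftrightsquigarrow t_{2n-1}$, so the last permutation produced by the replacement is $t_{2n-1}(\pi)$ — the same successor $\pi$ had in $C$. Thus the replacement closes up correctly and, since every new transition is on an odd index ($2n+1$ or the odd indices inside $\sigma$), the result is still a $\mK$-snake by the characterization quoted from~\cite{YeSc12}.

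\medskip

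\textbf{Step 2: identifying the two inserted cycles.} Tracking the last two ordered elements through the replacement: $\pi$ ends in $(\mathbf{x,y})$; applying $t_{2n+1}$ (push-to-the-top on index $2n+1$) pulls $a_1$... — more precisely it moves $z$ into the last-but-two slot, leaving the last pair unchanged, so after the first $t_{2n+1}$ the permutation ends in $(z,x)$ or, rather, I need to be careful: $t_{2n+1}$ acting on $[a_1,\dots,a_{2n-2},z,x,y]$ yields $[y,a_1,\dots,a_{2n-2},z,x]$, which ends in $(z,x)$ and so lies in class $[z,x]$. Running $\sigma$ (which by (P3), (P4) together with Corollary~\ref{cor:mergeCyclesCondition} cycles through that class and brings the permutation back to the point where the last three elements are arranged to permit the next $t_{2n+1}$) traverses exactly the $\mK$-snake $C_{\cT}^{\,t_{2n+1}(\pi)}$ representing $[z,x]$, up to its last transition. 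The second $t_{2n+1}$ then sends the permutation into class $[y,z]$, and the second $\sigma$ traverses the $\mK$-snake representing $[y,z]$. The final $t_{2n+1}$ returns to class $[x,y]$ and, by Step~1, lands precisely on $t_{2n-1}(\pi)$. So the inserted material is exactly (a rotation of) the two class-snakes for $[z,x]$ and $[y,z]$.

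\medskip

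\textbf{Step 3 (the main obstacle): no collisions.} The delicate point is disjointness. Permutations of $C$ lie in classes other than $[y,z]$ and $[z,x]$ by hypothesis, and the two inserted cycles lie entirely in $[z,x]$ and $[y,z]$ respectively (each $\sigma$-pass stays within one class, since the skeleton transitions are all on indices $\le 2n-1$ and so preserve the last two elements by property~(P1)/(P2) — this is where I lean on the fact that only $t_{2n+1}$ can change the class). The two inserted cycles are mutually disjoint because $[z,x]\ne[y,z]$ (as $x,y,z$ are distinct). Finally, each inserted cycle is internally a genuine $\mK$-snake, hence has no internal repetitions. The one residual check is that the single "glue" permutations produced by the three $t_{2n+1}$ transitions are not accidentally equal to something already present; but those are just the first permutations of $C_{\cT}^{t_{2n+1}(\pi)}$, of $C_{\cT}^{(t_{2n+1}\sigma t_{2n+1})(\pi)}$, and $t_{2n-1}(\pi)$ itself, all accounted for above. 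I expect the write-up of Step~3 to be the longest part, essentially a careful case-free argument that "class membership changes only under $t_{2n+1}$," after which disjointness is immediate from the hypothesis on $C$.
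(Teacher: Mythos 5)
Your proposal is correct and takes essentially the same route as the paper: the paper's proof likewise uses Lemma~\ref{lem:helpLem1} to replace $\sigma$ by its matching sequence $t_{2n-1}^{2n-2}$ and then traces the five transitions $t_{2n+1},\sigma,t_{2n+1},\sigma,t_{2n+1}$ from $\pi$, showing the walk passes through the class-snakes for $[z,x]$ and $[y,z]$ and lands on $t_{2n-1}(\pi)$, which is exactly your Steps 1--2 (your use of Corollary~\ref{cor:mainLem} is just that trace packaged as a lemma). Your Step 3 only makes explicit the disjointness that the paper leaves to its hypothesis on $C$, so no substantive difference remains.
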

\begin{proof}
Observe that by Lemma \ref{lem:helpLem1} we have~${\sigma \leftrightsquigarrow t_{2n-1}^{2n-2}}$.
Thus, we have
\begin{align*}
\pi=&
    \left.\begin{array}{l}
    		\left[ a_1, a_2, \ldots,  a_{2n-2}, z, \mathbf{{x},y} \right]
    \\ \end{array}\right. \\
    &
    \left.\begin{array}{l}
    		\downarrow {t_{2n+1}}
    \\ \end{array}\right. \\
    &
    \left.\begin{array}{l}
    		\left[ y, a_1, a_2, \ldots, a_{2n-2}, \mathbf{z, x} \right]   \\
    		\downarrow {\sigma \leftrightsquigarrow t_{2n-1}^{2n-2}} \\
    		\left[ a_1, a_2, \ldots, a_{2n-2}, y, \mathbf{z, x} \right]
    \\ \end{array}\right\}
    \left.\begin{array}{l}
    \mK-snake\\
    for\ \left[z,x\right] \\
    \end{array}\right. \\
    &
    \left.\begin{array}{l}
    		\downarrow {t_{2n+1}}
    \\ \end{array}\right. \\
    &
    \left.\begin{array}{l}
    		\left[ x, a_1, a_2, \ldots, a_{2n-2}, \mathbf{y, z}\right] \\
    		\downarrow {\sigma \leftrightsquigarrow t_{2n-1}^{2n-2}} \\
    		\left[ a_1, a_2, \ldots, a_{2n-2}, x,  \mathbf{y, z}\right]
    \\ \end{array} \right\}
    \left.\begin{array}{l}
    \mK-snake\\
    for\ \left[y, z\right] \\
    \end{array}\right. \\
    &
    \left.\begin{array}{l}
    		\downarrow {t_{2n+1}}
    \\ \end{array}\right. \hphantom{\left[ a_1, a_2, \ldots, x,\mathbf{y, z} \right]}
     \left.\begin{array}{l}
        return\ to\ the\ \\
        \mK-snake\ C  \\
        \end{array}\right. \\
t_{2n-1}(\pi)=&
    \left.\begin{array}{l}
    \left[ z, a_1, a_2, \ldots, a_{2n-2}, \mathbf{x, y}\right]
    \end{array}\right.
\end{align*}
\end{proof}
The next step is to present an order for merging all the $\mK$-snakes of $L_{2n+1}$,
except one, into $C_{2n+1}$.
This step will be performed by
translating the merging problem into a 3-graph problem.
We start with a sequence of definitions taken from~\cite{hypergraph}.
\begin{definition}
\label{def:3-graph}
A 3-graph (also called a 3-uniform hypergraph) $H=(V,E)$ is a hypergraph where $V$ is a set of vertices
and $E\subseteq \binom{V}{3}$. A~hyperedge of $H$ will be called triple.\\
A path in $H$ is an alternating sequence of $\ell+1$ distinct
vertices and $\ell$ distinct triples:
$v_0, e_1, v_1, \ldots ,v_{\ell-1}, e_{\ell}, v_{\ell}$,
with the property that $\forall i\in [\ell]: v_{i-1},v_{i} \in e_{i}$.\\
A cycle is a closed path, i.e. $v_0 = v_{\ell}$.\\
A sub-3-graph contains a subset $E'\subseteq E$ and the subset
$V'\subseteq V$ which contains all the vertices in $E'$.\\
A tree $T$ in $H$ is a connected sub-3-graph of $H$ with no cycles.
\end{definition}

\pagebreak
Let $H_{2n+1}=(V_{2n+1},E_{2n+1})$ be a 3-graph defined as follows:
\begin{align*}
V_{2n+1}&=\{[x,y]~:~x,y \in [2n+1], x\ne y\},\\
E_{2n+1}&=\{\{[x,y],[y,z],[z,x]\}~:\\
&\hphantom{=\{\}} x,y,z\in [2n+1], x\ne y, x\ne z, y\ne z\}.
\end{align*}
We denote a hyperedge $\{[x,y],[y,z],[z,x]\}$, where $x<y$ and $x<z$, by the triple $\langle x,y,z \rangle$.

The vertices in $H_{2n+1}$ correspond to the classes in the set $L_{2n+1}$.
Each $e\in E_{2n+1}$ contains three vertices,
which correspond to three classes.
These three classes can be represented
by three $\mK$-snakes, generated from the S-skeleton,
which can be merged together
by Corollary~\ref{cor:mergeCyclesCondition} and Lemma~\ref{prop:mergeClasses1}.
Note that for any two edges $e_1 , e_2$ in $H_{2n+1}$
either $e_1 \cap e_2 = \varnothing$ or $| e_1 \cap e_2 | =1$.
Let $T_{2n+1}=(V_{T_{2n+1}},E_{T_{2n+1}})$ be a tree in $H_{2n+1}$.
We join $|V_{T_{2n+1}}|$ $\mK$-snakes which represent $|V_{T_{2n+1}}|$ classes of $L_{2n+1}$
to form the $\mK$-snake $C_{2n+1}$,
by Corollary~\ref{cor:mergeCyclesCondition} and Lemma~\ref{prop:mergeClasses1}.
The hyperedges which represent the joins which are performed are determined by $T_{2n+1}$,
but these joins are not unique,
and hence they can yield different final $\mK$-snakes.
The order in which the hyperedges are selected for these joins is also not unique,
but this order doesn't affect the final $\mK$-snakes.
The size of the $\mK$-snake $C_{2n+1}$ depends on the number of vertices
in the tree $T_{2n+1}$.
A tree in a 3-graph
%the 3-graph $H_{2n+1}$
contains an odd number of vertices \cite{hypergraph}.
Since in $H_{2n+1}$ there are $(2n+1)(2n)$ vertices
it follows that there is no tree in $H_{2n+1}$ which contains
all the vertices of $V_{2n+1}$. This motivates the following definition.
\begin{definition}
A nearly spanning tree in a 3-graph $H=(V,E)$
is a tree in $H$
which contains all the vertices of $V$ except one.
\end{definition}

Now, let $T_{2n+1}$ be a nearly spanning tree in $H_{2n+1}$.

\begin{example}
\label{Exm:T5}
One choice for $T_{5}$ is given below.\\
The edges in the tree $T_{5}$ are:

\begin{tabular}{lll}
\label{treeFor5}
\hspace*{-\parindent}
\hspace*{-\parindent}
$\langle 1,2,5 \rangle,$  &  $\langle 1,2,4 \rangle,$  &  $\langle 1,2,3 \rangle,$ \tabularnewline
\hspace*{-\parindent}
\hspace*{-\parindent}
$\langle 1,4,5 \rangle,\langle 2,5,4 \rangle,$  &  $\langle 1,3,4 \rangle,\langle 2,4,3 \rangle,$  &  $\langle 1,5,3 \rangle, \langle 2,3,5 \rangle.$
\end{tabular}
The order of merging $\mK$-snakes
from these classes obtained by this choice of $T_5$
can be chosen as follows.
\begin{enumerate}[(1)]
{\setlength\itemindent{10pt}
\item vertex $\hphantom{ii}[1,2]$;
\item vertices $[3,1]$, $[2,3]$, (through the edge $\langle 1,2,3 \rangle$);
\item vertices $[4,1]$, $[2,4]$, (through the edge $\langle 1,2,4 \rangle$);
\item vertices $[5,1]$, $[2,5]$, (through the edge $\langle 1,2,5 \rangle$);
\item vertices $[5,3]$, $[1,5]$, (through the edge $\langle 1,5,3 \rangle$);
\item vertices $[5,2]$, $[3,5]$, (through the edge $\langle 2,3,5 \rangle$);
\item vertices $[3,4]$, $[1,3]$, (through the edge $\langle 1,3,4 \rangle$);
\item vertices $[3,2]$, $[4,3]$, (through the edge $\langle 2,4,3 \rangle$);
\item vertices $[4,5]$, $[1,4]$, (through the edge $\langle 1,4,5 \rangle$);
\item vertices $[4,2]$, $[5,4]$, (through the edge $\langle 2,5,4 \rangle$).}
\end{enumerate}

Using the S-skeleton $\cT=t_3,t_3,t_3$ of the $(3,3,\mK)$-snake,
the snake-in-the-box code which is obtained
by $T_{5}$ is a $(5,57,\mK)$-snake presented in Figure~\ref{fig:snake5}.
There is no $(5,M,\mK)$-snake for which $M>57$ \cite{YeSc12}.
The S-skeleton of this code is $\sigma ^ 3$, where
\begin{align*}
\sigma=	& t_5,t_5,t_3,t_3,t_5,t_3,t_3,t_5,t_3,\\
		& t_5,t_5,t_3,t_3,t_5,t_3,t_3,t_5,t_3,t_5
\end{align*}
\begin{figure*}[htbp]
\begin{center}
\small
\setlength{\tabcolsep}{1.5pt}
\begin{tabular}{*{56}{c|}c}
 3 & 2 & 1 & 3 & 2 & 5 & 3 & 2 & 4 & 3 & 1 & 5 & 3 & 1 & 2 & 3 & 1 & 4 & 3 & 5 &
 2 & 1 & 5 & 2 & 4 & 5 & 2 & 3 & 5 & 1 & 4 & 5 & 1 & 2 & 5 & 1 & 3 & 5 & 4 & 2 &
 1 & 4 & 2 & 3 & 4 & 2 & 5 & 4 & 1 & 3 & 4 & 1 & 2 & 4 & 1 & 5 & 4 \\
 4 & 3 & 2 & 1 & 3 & 2 & 5 & 3 & 2 & 4 & 3 & 1 & 5 & 3 & 1 & 2 & 3 & 1 & 4 & 3 &
 5 & 2 & 1 & 5 & 2 & 4 & 5 & 2 & 3 & 5 & 1 & 4 & 5 & 1 & 2 & 5 & 1 & 3 & 5 & 4 &
 2 & 1 & 4 & 2 & 3 & 4 & 2 & 5 & 4 & 1 & 3 & 4 & 1 & 2 & 4 & 1 & 5 \\
 5 & 4 & 3 & 2 & 1 & 3 & 2 & 5 & 3 & 2 & 4 & 3 & 1 & 5 & 3 & 1 & 2 & 3 & 1 & 4 &
 3 & 5 & 2 & 1 & 5 & 2 & 4 & 5 & 2 & 3 & 5 & 1 & 4 & 5 & 1 & 2 & 5 & 1 & 3 & 5 &
 4 & 2 & 1 & 4 & 2 & 3 & 4 & 2 & 5 & 4 & 1 & 3 & 4 & 1 & 2 & 4 & 1 \\
 1 & 5 & 4 & 4 & 4 & 1 & 1 & 1 & 5 & 5 & 2 & 4 & 4 & 4 & 5 & 5 & 5 & 2 & 2 & 1 &
 4 & 3 & 3 & 3 & 1 & 1 & 1 & 4 & 4 & 2 & 3 & 3 & 3 & 4 & 4 & 4 & 2 & 2 & 1 & 3 &
 5 & 5 & 5 & 1 & 1 & 1 & 3 & 3 & 2 & 5 & 5 & 5 & 3 & 3 & 3 & 2 & 2 \\
 2 & 1 & 5 & 5 & 5 & 4 & 4 & 4 & 1 & 1 & 5 & 2 & 2 & 2 & 4 & 4 & 4 & 5 & 5 & 2 &
 1 & 4 & 4 & 4 & 3 & 3 & 3 & 1 & 1 & 4 & 2 & 2 & 2 & 3 & 3 & 3 & 4 & 4 & 2 & 1 &
 3 & 3 & 3 & 5 & 5 & 5 & 1 & 1 & 3 & 2 & 2 & 2 & 5 & 5 & 5 & 3 & 3 \\
\end{tabular}
\end{center}
\caption{A $(5,57,\mK)$-snake obtained by $T_5$} \label{fig:snake5}
\end{figure*}
\end{example}

\begin{theorem}
\label{thm:treeCompose}
If $n\geq 2$, then there exists a nearly spanning tree $T_{2n+1}$ in $H_{2n+1}$ which doesn't include the vertex $[2,1]$.
\end{theorem}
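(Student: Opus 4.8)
The plan is to argue by induction on $n$. The base case $n=2$ is the explicit tree $T_5$ of Example~\ref{Exm:T5}: expanding its nine triples shows its vertex set is $V_5\setminus\{[2,1]\}$, so it is a nearly spanning tree of $H_5$ omitting $[2,1]$. For the inductive step fix $n\geq 3$ and assume a nearly spanning tree $T_{2n-1}$ of $H_{2n-1}$ that omits $[2,1]$; equivalently, its vertex set consists of all ordered pairs from $[2n-1]$ except $(2,1)$. Since the directed triangles whose three coordinates all lie in $[2n-1]$ are exactly the hyperedges of $H_{2n-1}$, this $T_{2n-1}$ is also a tree in $H_{2n+1}$, and I would enlarge it to $T_{2n+1}$ by adjoining hyperedges one at a time. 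The only tool needed, recorded as a one-line lemma, is that adjoining to a tree in a $3$-graph a hyperedge that meets the current vertex set in exactly one vertex again yields a tree (the two other vertices then have degree one, so no cycle is created); hence after $m$ such steps the vertex set grows by exactly $2m$.

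A nearly spanning tree of $H_{2n+1}$ has $(2n+1)(2n)-1$ vertices, i.e. $4n-1$ more hyperedges than $T_{2n-1}$; correspondingly the $8n-2$ vertices of $H_{2n+1}$ lying outside $H_{2n-1}$ — exactly the ordered pairs having $2n$ or $2n+1$ as a coordinate — are all to be adjoined (two per hyperedge, and $8n-2=2(4n-1)$), while $[2,1]$ stays the unique omitted vertex. I add the $4n-1$ hyperedges in two batches. \emph{Batch 1} introduces the coordinate $2n+1$: fix a derangement $\rho$ of $[2n-1]$ with $\rho(1)\neq 2$ — for instance the cyclic-predecessor map, for which $\rho(1)=2n-1\geq 3$ — and for each $x\in[2n-1]$ adjoin the hyperedge $\{[x,2n+1],[2n+1,\rho(x)],[\rho(x),x]\}$, i.e. the directed triangle $x\to(2n+1)\to\rho(x)\to x$. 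Here $[\rho(x),x]$ is an ordered pair from $[2n-1]$ other than $(2,1)$ (this is where $\rho(1)\neq2$ enters), hence already a vertex of the tree; the other two vertices contain $2n+1$, so they are new, and across the batch these new vertices are pairwise distinct, so whatever insertion order is used each step meets the current tree in exactly the vertex $[\rho(x),x]$. Since $\rho$ is a bijection of $[2n-1]$, Batch 1 adjoins exactly the $2(2n-1)$ vertices $[x,2n+1]$ and $[2n+1,x]$ for $x\in[2n-1]$.

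\emph{Batch 2} introduces the coordinate $2n$. Put $S=[2n-1]\cup\{2n+1\}$, of size $2n$; after Batch 1 every ordered pair with both entries in $S$ is a vertex of the current tree except $(2,1)$ (pairs inside $[2n-1]$ come from $T_{2n-1}$, pairs using $2n+1$ were added in Batch 1). Fix a derangement $\tau$ of $S$ with $\tau(1)\neq 2$ (such exist since $|S|=2n\geq 6$), and for each $x\in S$ adjoin $\{[x,2n],[2n,\tau(x)],[\tau(x),x]\}$, the directed triangle $x\to(2n)\to\tau(x)\to x$. Exactly as before, $[\tau(x),x]$ is already present and is not $[2,1]$, while $[x,2n]$ and $[2n,\tau(x)]$ contain $2n$, are new, and do not recur within the batch, so each of the $2n$ steps attaches two fresh vertices. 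Since $\tau$ is a bijection of $S$ and $2n+1\in S$, Batch 2 adjoins precisely all pairs involving $2n$ that were still missing, including $[2n,2n+1]$ and $[2n+1,2n]$. Altogether $4n-1$ hyperedges and $8n-2$ vertices were adjoined, the result is a tree, and its vertex set is exactly $V_{2n+1}\setminus\{[2,1]\}$, which completes the induction.

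The part needing genuine care is the bookkeeping that preserves the ``exactly one old vertex'' property at every insertion and certifies the final vertex set: for each batch one must check that the pair of vertices introduced by a hyperedge is truly new, is disjoint from the new vertices of the other hyperedges of that batch, and that its anchor vertex $[\rho(x),x]$ or $[\tau(x),x]$ lies in the tree already — which is exactly why $T_{2n-1}$ must omit only $[2,1]$ and why $\rho,\tau$ must avoid mapping $1$ to $2$. A more naive scheme organized around a single hub vertex such as $[1,2]$ gets stuck precisely here, since the residual ``internal'' hyperedges on the odd ground set $\{3,\ldots,2n+1\}$ cannot be sequenced to respect the one-old-vertex condition; routing the new vertices through $2n$ and $2n+1$ as above is what makes the condition automatic. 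Everything else is routine counting using $|V_{2n+1}|=(2n+1)(2n)$ and the fact that a tree in a $3$-graph with $m$ edges has $2m+1$ vertices.
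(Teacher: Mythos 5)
Your proposal is correct and follows essentially the same route as the paper: induction starting from the explicit $T_5$, observing that $T_{2n-1}$ is a tree in $H_{2n+1}$, and growing it by adjoining hyperedges each of which meets the current tree in exactly one (non-$[2,1]$) vertex, which is exactly the paper's justification that connectivity is preserved and no cycle is formed. The only difference is the explicit list of the $4n-1$ added hyperedges — your two derangement-based batches, first through $2n+1$ over $[2n-1]$ and then through $2n$ over $[2n-1]\cup\{2n+1\}$ (which absorbs $[2n,2n+1]$ and $[2n+1,2n]$ automatically), versus the paper's edges $\langle x,x+1,2n\rangle$ and $\langle x,x+1,2n+1\rangle$ for $2\le x\le 2n-2$ together with five special triples — and both lists cover the same $8n-2$ new vertices with the same one-old-vertex bookkeeping.
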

\begin{proof}
We present a recursive construction for such a nearly spanning tree.
We start with the nearly spanning tree given in Example~\ref{Exm:T5}.
Note that $T_5$ doesn't include the vertex $[2,1]$.
Assume that there exists a nearly spanning tree,~$T_{2n-1}$,
in $H_{2n-1}$, which doesn't include the vertex $[2,1]$.
Note that $H_{2n-1}$ is a sub-graph of~$H_{2n+1}$
and therefore~$T_{2n-1}$ is a tree in $H_{2n+1}$.
The vertices of $H_{2n+1}$ which are not spanned by $T_{2n-1}$ are
\begin{itemize}
\item $[x,2n], [2n,x], [x,2n+1], [2n+1,x]$ for each $x\in [2n-1]$,
\item $[2n,2n+1], [2n+1,2n]$,
\item $[2,1]$.
\end{itemize}
The nearly spanning tree $T_{2n+1}$ is constructed from $T_{2n-1}$ as follows.
For each $x$, $2\leq x \leq 2n-2$, the edges
${\langle x,x+1,2n \rangle}$ and $\langle x,x+1,2n+1 \rangle$
are joined to $T_{2n+1}$;
also the edges
$\langle 1,2,2n \rangle$,
$\langle 1,2n,2n-1 \rangle$,
$\langle 1,2n+1,2n-1 \rangle$,
$\langle 1,2n,2n+1 \rangle$, and
$\langle 2,2n+1,2n \rangle$ are joined to $T_{2n+1}$.
It is easy to verify that all the vertices of $H_{2n+1}$
which are not spanned by $T_{2n-1}$ (except for $[2,1]$)
are contained in the list of the edges which are joined to $T_{2n-1}$.
When an edge is joined to the tree it has one vertex which is already in the
tree and two vertices which are not on the tree. Hence, connectivity is
preserved and no cycle is formed.
Hence, it is easy to verify that by joining these edges to $T_{2n-1}$
we form a nearly spanning tree in $H_{2n+1}$.
\end{proof}

\begin{example}
\label{Exm:T7}
By using Theorem~\ref{thm:treeCompose} and the
nearly spanning tree $T_5$ of Example~\ref{Exm:T5}
we obtain the spanning tree $T_7$ depicted in Figure~\ref{fig:T7}.
The dashed boxes edges and the double lines nodes are added to $T_5$ in order to form $T_7$.
\pagestyle{empty}
\tikzstyle{level 1}=[sibling angle=90]
\tikzstyle{level 2}=[sibling angle=80]
\tikzstyle{level 3}=[sibling angle=50]
\tikzstyle{level 4}=[sibling angle=50]
\tikzstyle{every node}=[draw]
\tikzstyle{edge from parent}=[draw]
%\begin{figure}[H]
\begin{figure*}[htbp]
\centering
\begin{tikzpicture}[grow cyclic,very thick,level distance=10mm,
                    cap=round, scale=0.9]
                    \tikzset{VertexStyle/.style = {shape          = circle,
                                                     draw,
                                                     text           = black,
                                                     inner sep      = 1pt,
                                                     outer sep      = 0pt,
                                                     minimum size   = 8 pt,
                                                     scale=0.9}}
                    \tikzset{Vertex7Style/.style = {shape          = circle,
                                                     draw,
                                                     thick,
                                                     double,
                                                     text           = black,
                                                     inner sep      = 1pt,
                                                     outer sep      = 1pt,
                                                     minimum size   = 8 pt,
                                                     scale=0.9}}
                      \tikzset{EdgeStyle/.style   = {draw, thin, scale=0.9}}
                      \tikzset{LabelStyle/.style =   {text = black, draw=none, scale=0.9}}
                      \tikzset{Label7Style/.style =   {text = black,outer sep = 1pt,dashed,thin, scale=0.9}}
\node[VertexStyle] {12}
	child
	{
		node[Label7Style] {126}
		child{
				node[Vertex7Style] {26}
				}
		child{
				node[Vertex7Style] {61}
				}
		}
	child{
		node[LabelStyle] {124}
		child{
				node[VertexStyle] {24}
				child{
						node[LabelStyle] {243}
						child{
								node[VertexStyle] {43}
								}
						child{
								node[VertexStyle] {32}
								}
						}
				}
		child{
				node[VertexStyle] {41}
				child{
						node[LabelStyle] {134}
						child{
								node[VertexStyle] {13}
								}
						child{
								node[VertexStyle] {34}
								child{
										node[Label7Style] {346}
										child{
												node[Vertex7Style] {46}
												}
										child{
												node[Vertex7Style] {63}
												}
										}
								child{
										node[Label7Style] {347}
										child{
												node[Vertex7Style] {47}
												}
										child{
												node[Vertex7Style] {73}
												}
										}
								}
						}
				child [missing]
				child [missing]
			}
		}
	child{
		node[LabelStyle] {123}
		child{
				node[VertexStyle] {23}
				child{
						node[LabelStyle] {235}
						child{
								node[VertexStyle] {35}
								}
						child{
								node[VertexStyle] {52}
								}
						}
				child
				{
						node[Label7Style] {236}
						child{
								node[Vertex7Style] {36}
								}
						child{
								node[Vertex7Style] {62}
								child{
														node[Label7Style] {276}
														child{
																node[Vertex7Style] {27}
																}
														child{
																node[Vertex7Style] {76}
																}
														}
								}
						}
				child
				{
						node[Label7Style] {237}
						child{
								node[Vertex7Style] {37}
								}
						child{
								node[Vertex7Style] {72}
								}
						}
				}
		child{
				node[VertexStyle] {31}
				child{
						node[LabelStyle] {153}
						child{
								node[VertexStyle] {15}
								}
						child{
								node[VertexStyle] {53}
								}
						}
			}
		}
	child{
		node[LabelStyle] {125}
		child{
				node[VertexStyle] {25}
				child{
						node[LabelStyle] {254}
						child{
								node[VertexStyle] {54}
								}
						child{
								node[VertexStyle] {42}
								}
						}
				}
		child{
				node[VertexStyle] {51}
				child{
						node[LabelStyle] {145}
						child{
								node[VertexStyle] {14}
								}
						child{
								node[VertexStyle] {45}
								child
								{
										node[Label7Style] {456}
										child{
												node[Vertex7Style] {56}
												}
										child{
												node[Vertex7Style] {64}
												}
										}
								child
								{
										node[Label7Style] {457}
										child{
												node[Vertex7Style] {57}
												}
										child{
												node[Vertex7Style] {74}
												}
										}
								}
						child [missing]
						}
				child
				{
						node[Label7Style] {165}
						child{
								node[Vertex7Style] {16}
								child{
										node[Label7Style] {167}
										child [missing]
										child{
											node[Vertex7Style] {67}
												}
										child{
											node[Vertex7Style] {71}
											}
										}
								}
						child{
								node[Vertex7Style] {65}
								}
						}
				child
				{
						node[Label7Style] {175}
						child{
								node[Vertex7Style] {17}
								}
						child{
								node[Vertex7Style] {75}
								}
						}
			}
		}
	;
\end{tikzpicture}
\vspace{-15pt}
\caption{The nearly spanning tree $T_7$ constructed from $T_5$} \label{fig:T7}
\end{figure*}
\end{example}

\section{A Recursive Construction}
\label{sec:recursive}

In this section we present the recursive construction for a
$(2n+1,M_{2n+1},\mK)$-snake from a
$(2n-1,M_{2n-1},\mK)$-snake. The construction is based on the
nearly spanning tree $T_{2n+1}$ presented in the previous section.
Each of its vertices represent a class in which a $\mK$-snake based
on the $(2n-1,M_{2n-1},\mK)$-snake is generated. Those $\mK$-snakes
are merged together into one $(2n+1,M_{2n+1},\mK)$-snake using the
framework presented in the previous section. We conclude this
section with an analysing the length of the generated
$\mK$-snake compared the total number of permutations in $S_{2n+1}$.

We generate a $(2n+1,M_{2n+1},\mK)$-snake, $C_{2n+1}$,
whose transitions sequence is
${t_{k_1} , t_{k_2} ,\ldots , t_{k_{M_{2n+1}}}}$.
$C_{2n+1}$ has the following properties:
\begin{enumerate}
\item[(Q1)] $k_j$ is odd for all ${j\in[M_{2n+1}]}$.
\item[(Q2)] $k_{M_{2n+1}}=2n+1$.
\item[(Q3)] For each $z\in [2n+1]$ there exists a permutation $\pi \in C_{2n+1}$ such that $\pi(2n+1)=z$.
\end{enumerate}
%The starting point is of the recursive construction can be $2n+1=5$
%and the $(5,57,\mK)$-snake of Examle~\ref{Exm:T5}.
%It is easy to verify that (Q1),(Q2), and (Q3) hold for this $\mK$-snake.
The starting point of the recursive construction is $2n+1=3$.
The transitions sequence for $2n+1=3$ is $t_3,t_3,t_3$,
and the complete $(3,3,\mK)$-snake is $C_3\eqdef \{[1, 2, 3], [3, 1, 2], [2, 3, 1]\}$.
Clearly (Q1), (Q2), and (Q3) hold for this transitions sequence and $C_3$.

Now, assume that there exists a $(2n-1,M_{2n-1},\mK)$-snake, $C_{2n-1}$,
which satisfies properties (Q1), (Q2), (Q3), and let
$\cT_{2n-1}=t_{k_1} , t_{k_2} , \ldots , t_{k_{M_{2n-1}}}$
be its S-skeleton, i.e., $\cT_{2n-1}$ is the transitions sequence of $C_{2n-1}$.
Note that (Q1), (Q2), and (Q3) depend on the transitions sequence $\cT_{2n-1}$
and are independent of the first permutation of $C_{2n-1}$.
We construct a ${(2n+1,M_{2n+1},\mK)}$-snake, $C_{2n+1}$,
where $M_{2n+1}=((2n+1)(2n)-1)M_{2n-1}$, which also satisfies (Q1), (Q2), and (Q3).

First, all the permutations of $A_{2n+1}$ are
partitioned into ${(2n+1)(2n)}$ classes according
to the last ordered two elements in the permutations.
This implies that (P1) and (P2) are satisfied.
In addition, (P3) and (P4) for $\cT_{2n-1}$ are immediately
implied by (Q2) and (Q3) for  $C_{2n-1}$, respectively.
Hence $\cT_{2n-1}$ can be used
as the S-skeleton for the $\mK$-snakes in $L_{2n+1}$.
Now, we merge the $\mK$-snakes of the classes in $L_{2n+1}$ (except $[2,1]$),
by using Lemma~\ref{prop:mergeClasses1} and the nearly spanning tree $T_{2n+1}$ of
Theorem~\ref{thm:treeCompose}.
We have to show that (Q1), (Q2), and (Q3) are satisfied for $C_{2n+1}$. (Q1) is readily verified.
Clearly, $t_{2n+1}$ was used to obtain $C_{2n+1}$ (see Lemma~\ref{prop:mergeClasses1}),
and therefore we can always define $\cT_{2n+1}$
in such a way that its last transition is $t_{2n+1}$, and hence (Q2) is satisfied.
For each $z\in [2n+1]$ there exists a class $[x,z]$
whose $\mK$-snake is joined into $C_{2n+1}$,
and therefore (Q3) is satisfied.
Thus, we have
\begin{theorem}
Given a $(2n-1,M_{2n-1},\mK)$-snake which satisfies (Q1), (Q2), and (Q3),
we can obtain a $(2n+1,M_{2n+1},\mK)$-snake,
where $M_{2n+1}=((2n+1)(2n)-1)M_{2n-1}$,
which also satisfies (Q1), (Q2), and (Q3).
\end{theorem}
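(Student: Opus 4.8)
The plan is to make the recursive construction precise and then verify the size formula together with (Q1)--(Q3); most of the work has already been packaged into the lemmas above, so the proof is mainly a matter of assembling them. First I would partition $A_{2n+1}$ into the $(2n+1)(2n)$ classes $[x,y]$ determined by the last ordered pair, so that (P1) and (P2) hold automatically. Next I would read the transitions sequence $\cT_{2n-1}$ of $C_{2n-1}$ as a sequence of p-transitions on $S_{2n+1}$: since every index occurring in it is at most $2n-1$, such a transition fixes the last two coordinates and acts on the first $2n-1$ coordinates exactly as it does in $S_{2n-1}$. Under this reading, (Q2) for $C_{2n-1}$ (its last transition is $t_{2n-1}$) is precisely (P3), and (Q3) for $C_{2n-1}$ (every symbol occupies coordinate $2n-1$ at some point along the snake) is precisely (P4). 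Hence $\cT_{2n-1}$ is a legitimate S-skeleton for the classes of $L_{2n+1}$, and by Corollary~\ref{cor:mergeCyclesCondition} together with Lemma~\ref{prop:mergeClasses1} every hyperedge of $H_{2n+1}$ prescribes a valid three-way join.

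Second, I would invoke Theorem~\ref{thm:treeCompose} to fix a nearly spanning tree $T_{2n+1}$ of $H_{2n+1}$ avoiding the vertex $[2,1]$, so that $T_{2n+1}$ has $(2n+1)(2n)-1$ vertices. Starting from the class-snake of the root and processing the hyperedges of $T_{2n+1}$ in the order in which they are adjoined in the proof of Theorem~\ref{thm:treeCompose} (so that each edge, when processed, has exactly one vertex already present in the current snake), I would apply Lemma~\ref{prop:mergeClasses1} repeatedly. At each step the two classes being adjoined are the two new vertices of the edge, so the hypothesis of Lemma~\ref{prop:mergeClasses1} --- that the current snake contains no permutation of those two classes --- holds precisely because $T_{2n+1}$ is connected and acyclic. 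From the explicit description in the proof of Lemma~\ref{prop:mergeClasses1}, each join merely replaces one $t_{2n-1}$ transition by a block that returns to $t_{2n-1}(\pi)$ while traversing, once each, exactly the two new class-snakes; hence no permutation is ever deleted or repeated, and the process ends with a single $\mK$-snake $C_{2n+1}$ whose codeword set is the disjoint union of the $(2n+1)(2n)-1$ class-snakes. Since each class-snake is generated from the same S-skeleton $\cT_{2n-1}$ and hence, like $C_{2n-1}$, has $M_{2n-1}$ codewords, $M_{2n+1}=((2n+1)(2n)-1)M_{2n-1}$.

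Finally I would check (Q1)--(Q3) for $C_{2n+1}$. For (Q1): $\cT_{2n-1}$ uses only odd indices by (Q1) for $C_{2n-1}$, and each join inserts only copies of $t_{2n+1}$ and of the prefix $\sigma$ of $\cT_{2n-1}$, again of odd index; in particular all codewords of $C_{2n+1}$ are even, so by the criterion of~\cite{YeSc12} $C_{2n+1}$ is automatically a $\mK$-snake. For (Q2): the transition $t_{2n+1}$ occurs in $C_{2n+1}$, and since the code is cyclic we may relabel its starting codeword so that $t_{2n+1}$ becomes the last transition. For (Q3): any $z\in[2n+1]$ lies in the $2n\ge 2$ classes $[x,z]$, at least one of which is a vertex of $T_{2n+1}$ (only $[2,1]$ is omitted); every codeword of its class-snake has $z$ in coordinate $2n+1$, and that class-snake lies inside $C_{2n+1}$. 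The only genuinely delicate point, which I would treat with care, is the bookkeeping in the second paragraph: showing that the successive joins partition the codewords of $C_{2n+1}$ among the tree's classes with no loss and no repetition, so that the size is exactly as stated --- everything else is a direct translation of results already established.
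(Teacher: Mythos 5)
Your proposal is correct and follows essentially the same route as the paper: partition $A_{2n+1}$ by the last ordered pair, use $\cT_{2n-1}$ as the common S-skeleton (with (Q2),(Q3) supplying (P3),(P4)), merge the class-snakes along the nearly spanning tree of Theorem~\ref{thm:treeCompose} via Lemma~\ref{prop:mergeClasses1}, and then verify (Q1)--(Q3) and the count $((2n+1)(2n)-1)M_{2n-1}$. Your write-up merely makes explicit some bookkeeping (why the hypothesis of Lemma~\ref{prop:mergeClasses1} holds at each join, and the disjointness/size count) that the paper leaves as readily verified.
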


Following~\cite{YeSc12},
we define $D_{2n+1}=\frac{M_{2n+1}}{(2n+1)!}$ as the ratio between
the number of permutations in the given ${(2n+1, M_{2n+1},\mK)}$-snake
and the size of $S_{2n+1}$.
Recall that if $C$ is an $(2n+1,M,\mK)$-snake then $M\leq \frac{{|S_{2n+1}|}}{2}$,
and we conjecture that optimal size is ${M=(2n+1)!-2n+1}$.
Thus, it is desirable to obtain a value $D_{2n+1}$ close to half as much as possible.
%The ratio $\frac{D_{2n+1}}{D_{2n-1}}$ defined also in \cite{YeSc12}
%is an appropriate measure for the efficiency of the $(2n+1,M_{2n+1},\mK)$-snake.
%It is expected that this ratio will always be smaller than~1 and the
%construction is evaluated by the difference between this ratio and~1.
In our recursive construction $M_{2n+1}=((2n+1)(2n)-1)M_{2n-1}$.
Thus, we have
\begin{align*}
D_3&=\frac{1}{2},\\
\prod_{n=2}^{\infty}\frac{D_{2n+1}}{D_{2n-1}}&=
\frac{12\sqrt{\pi}}{5(1+\sqrt{5})\Gamma(\frac{1}{4}(5-\sqrt{5}))\Gamma(\frac{1}{4}(1+\sqrt{5}))},
\end{align*}
which implies that
\begin{align*}
\lim\limits_{n\to \infty} D_{2n+1}&= \frac{1}{2}\cdot \frac{12\sqrt{\pi}}{5(1+\sqrt{5})\Gamma(\frac{1}{4}(5-\sqrt{5}))\Gamma(\frac{1}{4}(1+\sqrt{5}))}\\
&\approx 0.4338.
\end{align*}
This computation can be done by any mathematical tool, e.g., WolframAlpha.
This improves on the construction described in \cite{YeSc12}, which yields $M_{2n+1}=(2n+1)(2n-1)M_{2n-1}$ and $\lim\limits_{n\to \infty} D_{2n+1}=\lim\limits_{n\to \infty} \frac{1}{\sqrt{\pi n}}$.
%Therefore, in our construction $\frac{D_{2n+1}}{D_{2n-1}}=\frac{(2n+1)(2n)-1}{(2n+1)(2n)}=1-\frac{1}{(2n+1)(2n)}$
%which is greater and hence better than the ratio $\frac{2n-1}{2n}=1-\frac{1}{2n}$
%obtained in the recursive construction described in~\cite{YeSc12}.

\section{A Direct Construction based on Necklaces}
\label{sec:direct}

In this section we describe a direct construction to
form a $(2n+1,M_{2n+1},\mK)$-snake. First, we describe
a method to partition the classes which were used before
into subclasses that are similar to necklaces. Next, we show how
subclasses from different classes are merged into
disjoint chains. Finally, we present a hypergraph and a graph
in which we have to search for certain trees to form our desired $\mK$-snake which
we believe is of maximum length. Such $\mK$-snakes were found in $S_7$ and $S_9$.

We present a direct construction for a $(2n+1,M_{2n+1},\mK)$-snake, $C_{2n+1}$.
The goal is to obtain $M_{2n+1}=\frac{(2n+1)!}{2}-(2n-1)$,
and hence ${\frac{D_{2n+1}}{D_{2n-1}}\geq 1-\frac{1}{(2n)!}}$.
We believe that there is always a $(2n+1,M_{2n+1},\mK)$-snake with
$M_{2n+1}=\frac{(2n+1)!}{2}-(2n-1)$ and there is no such $\mK$-snake
with more codewords.
We are making a slight change in the framework discussed in Section~\ref{sec:main}.
First, all the permutations of $A_{2n+1}$ are
partitioned into ${(2n+1)(2n)}$ classes according to the last ordered two elements.
We denote by $[x,y]$ the class of all even permutations
in which the last ordered pair in the permutation is $(x,y)$.
Each class is further partitioned into subclasses
according to the cyclic order of the first $2n-1$ elements in the permutations,
i.e., in each class $[x,y]$, the $\frac{(2n-1)!}{2}$ permutations
are partitioned into $\frac{(2n-2)!}{2}$ disjoint subclasses.
This implies that (P1) and (P2) are satisfied for both classes and subclasses.
Let's denote each one of the subclasses by $[\alpha]-[x,y]$ where $\alpha$
is the cyclic order of the first $2n-1$ elements in the permutations of the subclass.
Let $\alpha_1, \alpha_2$ be two permutations over $[2n+1]\setminus \{x,y\}$.
If $\alpha_1$ and $\alpha_2$ have the same cyclic order, we denote it by $\alpha_1 \simeq \alpha_2$,
otherwise $\alpha_1 \not\simeq \alpha_2$.
Note that if $\alpha_1 \simeq \alpha_2$ then $[\alpha_1]-[x,y]=[\alpha_2]-[x,y]$.
For example $[1,2,3]-[4,5]$ represents the subclass
with the permutations $[1,2,3,4,5]$, $[3,1,2,4,5]$, and $[2,3,1,4,5]$.

Let $L_{2n+1}$ be the set of all classes,
and let $\cT=t_{2n-1}^{2n-1}$ be the S-skeleton of the $\mK$-snakes in $L_{2n+1}$.
Note that a $\mK$-snake generated by $\cT$
spans exactly all the permutations in one subclass.
Hence (P3) and (P4) are immediately implied for both classes and subclasses.
Such a $\mK$-snake will be called a \emph{necklace}.
The slight change in the framework is that instead of one $\mK$-snake,
each class contains $\frac{(2n-2)!}{2}$ $\mK$-snakes,
all of them have the same S-skeleton.

The necklaces (subclasses) $[\alpha]-[x,y]$ are similar to
necklaces on $2n-1$ elements. Joining the necklaces into
one large $\mK$-snake might be similar to the join of cycles from
the pure cycling register of order $2n-1$, PCR$_{2n-1}$,
into one cycle, which is also known as a de Bruijn sequence~\cite{EtLe84,Fred82}.
There are two main differences between the two types of necklaces.
The first one is that in de Bruijn sequences
the necklaces do not represent permutations, but words
of a given length over some finite alphabet. The second is
that there is rather a simple mechanism to join all the
necklaces into a de Bruijn sequence.
We would like to have such a mechanism to join
as many as possible necklaces from all the classes into one $\mK$-snake.

Let $T_{2n+1}$ be the nearly spanning tree constructed by Theorem~\ref{thm:treeCompose}.
By repeated application of Lemma~\ref{prop:mergeClasses1}
according to the hyperedges of $T_{2n+1}$ starting from a necklace in the class $[1,2]$
we obtain a $\mK$-snake which contains exactly one necklace
from each class $[x,y]\ne [2,1]$.
Such a $\mK$-snake will be called a \emph{chain}.
If the chain contains the necklace $[\alpha]-[1,2]$, we will denote it by $c[\alpha]$.
For two permutations $\alpha_1$ and $\alpha_2$ over $[2n+1]\setminus \{1,2\}$
such that $\alpha_1 \simeq \alpha_2$ we have $c[\alpha_1]=c[\alpha_2]$.
Note that there is a unique way to merge the three necklaces which correspond to a hyperedge of $T_{2n+1}$,
and hence there is no ambiguity in $c[\alpha]$ (even so the order of the joins is not unique),
Note also that the transitions sequence of two distinct chains is usually different.
The number of permutations in a chain is $((2n+1)(2n)-1)(2n-1)$.
The following lemma is an immediate consequence of Lemma~\ref{prop:mergeClasses1}.

\begin{lemma}
\label{lem:key1}
Let $[x,y]$, $[y,z]$, and $[z,x]$ be three classes,
and let $\alpha$ be a permutation of $[2n+1]\setminus \{x,y,z\}$.
The necklaces $[\alpha,z]-[x,y]$, $[\alpha,y]-[z,x]$,
and $[\alpha,x]-[y,z]$ can be merged together, where
$\alpha,z$ is the sequence formed by concatenation of $\alpha$
and $z$.
\end{lemma}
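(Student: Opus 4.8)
The plan is to show that Lemma~\ref{lem:key1} is nothing more than a bookkeeping translation of Lemma~\ref{prop:mergeClasses1}, once we track what happens to the ordered prefix of the permutations involved. First I would fix the permutation $\pi$ whose last three ordered elements are $(z,x,y)$ and whose first $2n-2$ elements spell out $\alpha$; concretely, write $\pi=[\alpha, z, x, y]$ where $\alpha=a_1,\dots,a_{2n-2}$ is a permutation of $[2n+1]\setminus\{x,y,z\}$. Such a $\pi$ lies in the subclass $[\alpha,z]-[x,y]$, and by Corollary~\ref{cor:mergeCyclesCondition} applied to the $\mK$-snake generated by $\cT=t_{2n-1}^{2n-1}$ starting at $\pi$, the permutation $\pi$ is indeed reachable inside that necklace and is followed there by the transition $t_{2n-1}$ (here we also use property (P3), namely that $t_{2n-1}$ is the last transition of $\cT$). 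So the hypothesis of Lemma~\ref{prop:mergeClasses1} is met, with $\sigma=t_{2n-1}^{2n-2}$ (since $\cT=\sigma\circ t_{2n-1}$), and we may replace that $t_{2n-1}$ transition by $t_{2n+1}\circ\sigma\circ t_{2n+1}\circ\sigma\circ t_{2n+1}$.

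Next I would read off, from the explicit chain of permutations displayed in the proof of Lemma~\ref{prop:mergeClasses1}, exactly which necklaces get spliced in. That proof shows the inserted block passes through $[\alpha, y, z, x]$ (start of a necklace for the class $[z,x]$) and through $[\alpha, x, y, z]$ (start of a necklace for the class $[y,z]$), each followed by $\sigma\leftrightsquigarrow t_{2n-1}^{2n-2}$, i.e.\ by a full traversal of the $\mK$-snake generated by $\cT$ on that permutation. By the definition of subclasses and the remark that $\sigma$ applied inside a class only cyclically permutes the first $2n-1$ elements, the necklace entered at $[\alpha, y, z, x]$ is precisely $[\alpha,y]-[z,x]$, and the one entered at $[\alpha, x, y, z]$ is precisely $[\alpha,x]-[y,z]$. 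Thus the three necklaces $[\alpha,z]-[x,y]$, $[\alpha,y]-[z,x]$, $[\alpha,x]-[y,z]$ are merged into a single $\mK$-snake, which is exactly the assertion.

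The only genuinely nontrivial point — and the step I expect to be the main obstacle — is verifying that the cyclic order of the first $2n-1$ elements is the same, namely $\alpha,z$ read cyclically, for all three necklaces: the block in Lemma~\ref{prop:mergeClasses1} begins from $[a_1,\dots,a_{2n-2},z,x,y]$, and after the first $t_{2n+1}$ it becomes $[y,a_1,\dots,a_{2n-2},z,x]$, so the relevant prefix is $y,a_1,\dots,a_{2n-2},z$; one must check this is the cyclic class $[\alpha,y]$ with $\alpha=a_1,\dots,a_{2n-2}$ sitting before $z$ — i.e.\ that reading $(z,x,y)\to(y,z,x)\to(x,y,z)$ corresponds to the same $\simeq$-class of the length-$(2n-1)$ prefix after deleting the last two coordinates. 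This is a short computation with the definition of $\simeq$ and the push-to-the-top action, and once it is done the lemma follows; I would present it as a one-line verification of cyclic equivalence and then simply cite Lemma~\ref{prop:mergeClasses1} for everything else.
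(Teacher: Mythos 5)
Your proposal is correct and follows essentially the same route as the paper, which presents this lemma as an immediate consequence of Lemma~\ref{prop:mergeClasses1}: you apply that lemma to $\pi=[\alpha,z,x,y]$ inside the necklace $[\alpha,z]-[x,y]$ (with $\sigma=t_{2n-1}^{2n-2}$, since $\cT=t_{2n-1}^{2n-1}$) and correctly read off from its proof that the two spliced-in snakes are exactly the necklaces $[\alpha,y]-[z,x]$ and $[\alpha,x]-[y,z]$. The only slip is in your final paragraph: the relevant length-$(2n-1)$ prefix of $[y,a_1,\ldots,a_{2n-2},z,x]$ is $y,a_1,\ldots,a_{2n-2}$ (cyclically $\alpha,y$), not $y,a_1,\ldots,a_{2n-2},z$, but the identification you actually use in your second paragraph is the correct one, so the argument stands.
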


\begin{lemma}
\label{lem:key2}
Let $[x,y]$, $[y,z]$, and $[z,x]$ be three classes.
All the subclasses in these classes can be partitioned into disjoint sets,
where each set contains exactly one necklace
from each of the above three classes.
The necklaces of each set can be merged together into one $\mK$-snake.
\end{lemma}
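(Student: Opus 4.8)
The plan is to reduce Lemma~\ref{lem:key2} to a repeated application of Lemma~\ref{lem:key1}, whose content is precisely that a matched triple of necklaces, one from each of $[x,y]$, $[y,z]$, $[z,x]$, can be merged. So the real work is purely combinatorial bookkeeping: showing that the $\frac{(2n-2)!}{2}$ necklaces in each of the three classes can be grouped into $\frac{(2n-2)!}{2}$ triples, each triple being of the form $\{[\alpha,z]-[x,y],\ [\alpha,y]-[z,x],\ [\alpha,x]-[y,z]\}$ for some cyclic order $\alpha$ of $[2n+1]\setminus\{x,y,z\}$. First I would recall that a subclass of $[x,y]$ is named by the cyclic order of its first $2n-1$ entries, which are exactly the $2n-1$ symbols of $[2n+1]\setminus\{x,y\}$; since $[2n+1]\setminus\{x,y\} = ([2n+1]\setminus\{x,y,z\}) \cup \{z\}$, every such cyclic order can be written uniquely as ``$\beta$ with $z$ inserted somewhere into the cyclic order $\beta$'' — but the canonical representative I want is the one where $z$ is written last, i.e.\ $[\alpha,z]$ with $\alpha$ a (linear) arrangement of $[2n+1]\setminus\{x,y,z\}$, taken up to cyclic rotation of $\alpha$ itself. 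The key counting point is that the number of cyclic orders of a $(2n-1)$-set equals $(2n-2)!/2$ (for even arrangements), and likewise the number of cyclic orders $[\alpha]$ of the $(2n-2)$-set $[2n+1]\setminus\{x,y,z\}$, which controls the number of triples.

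The main subtlety — and what I expect to be the only real obstacle — is the bijection underlying the partition. I need to argue that the assignment $\alpha \mapsto [\alpha,z]-[x,y]$ is a bijection from cyclic orders of $[2n+1]\setminus\{x,y,z\}$ onto subclasses of $[x,y]$. Surjectivity: given any subclass $[\gamma]-[x,y]$, the cyclic order $\gamma$ on $2n-1$ symbols has $z$ in exactly one cyclic position; rotating so that $z$ is last exhibits $\gamma \simeq (\alpha,z)$ for a well-defined cyclic order $\alpha$ of the remaining $2n-2$ symbols. Injectivity: if $(\alpha_1,z)\simeq(\alpha_2,z)$ as cyclic orders on $2n-1$ symbols, then deleting $z$ from both (a well-defined operation on cyclic orders, since $z$ occupies a unique slot) forces $\alpha_1\simeq\alpha_2$. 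The same map works verbatim for the classes $[z,x]$ (using $[\alpha,y]$) and $[y,z]$ (using $[\alpha,x]$). Hence the three families of subclasses are each indexed by the same set of cyclic orders $\alpha$, and the sets $S_\alpha \triangleq \{[\alpha,z]-[x,y],\ [\alpha,y]-[z,x],\ [\alpha,x]-[y,z]\}$ form the desired partition into disjoint triples.

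Once the partition is established, the merging claim is immediate: for each fixed $\alpha$, Lemma~\ref{lem:key1} with this $\alpha$ says exactly that the three necklaces of $S_\alpha$ can be merged together into one $\mK$-snake. Since distinct $S_\alpha$ involve disjoint sets of subclasses (by injectivity above, applied in each of the three classes), no subclass is used twice, and the lemma follows. One point worth a sentence in the writeup is why the merged object is again a legitimate $\mK$-snake rather than merely a Gray code: this is because the merge in Lemma~\ref{prop:mergeClasses1} (hence in Lemma~\ref{lem:key1}) uses only p-transitions on odd indices — $t_{2n+1}$ and the transitions of $\sigma$, which are those of the S-skeleton $\cT = t_{2n-1}^{2n-1}$ — so by the observation of~\cite{YeSc12} recalled in Section~\ref{sec:pre} the resulting code is automatically a $\mK$-snake. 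I would therefore only need to verify that the start permutation of each necklace lies in the stated class and that the three necklaces of $S_\alpha$ are pairwise disjoint, both of which are contained in the bijection argument above.
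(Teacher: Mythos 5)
Your proposal follows essentially the same route as the paper: the paper's entire proof is the observation that for each permutation $\alpha$ of $[2n+1]\setminus\{x,y,z\}$ the triple $[\alpha,z]-[x,y]$, $[\alpha,y]-[z,x]$, $[\alpha,x]-[y,z]$ can be merged by Lemma~\ref{lem:key1}, and that these triples partition the subclasses; your writeup simply makes the underlying bijection explicit, which the paper leaves implicit, and your closing remark about the merge using only odd-index p-transitions is consistent with the framework.

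One slip to fix before writing it up: the correct index set is the set of \emph{linear} arrangements $\alpha$ of $[2n+1]\setminus\{x,y,z\}$ (restricted by the parity forced by evenness of $[\alpha,z,x,y]$), not ``cyclic orders of $[2n+1]\setminus\{x,y,z\}$'' with $\alpha$ ``taken up to cyclic rotation of $\alpha$ itself.'' Rotating $\alpha$ while keeping $z$ in the last slot changes the cyclic order of the full $(2n-1)$-element prefix and hence yields a \emph{different} subclass, so the map $\alpha\mapsto[\alpha,z]-[x,y]$ is not well defined on cyclic classes of $\alpha$; the counts also fail to match, since there are $(2n-3)!$ cyclic orders of a $(2n-2)$-set but $(2n-2)!/2$ subclasses per class. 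Your own surjectivity and injectivity arguments (rotate the prefix so that $z$ is last, then delete $z$) actually pin down $\alpha$ as a linear word, which is exactly the indexing used in Lemma~\ref{lem:key1} and in the paper, so the argument goes through once the phrasing is corrected.
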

\begin{proof}
For each permutation $\alpha$ over $[2n+1]\setminus \{x,y,z\}$,
the necklaces $[\alpha,z]-[x,y]$, $[\alpha,y]-[z,x]$,
and $[\alpha,x]-[y,z]$ can be merged by Lemma \ref{lem:key1}.
Thus, all the subclasses in these classes can be partitioned into disjoint sets.
\end{proof}

\begin{cor}
\label{cor:disjointChains}
The permutations of all the classes except for $[2,1]$
can be partitioned into disjoint chains.
\end{cor}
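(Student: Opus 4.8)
Corollary~\ref{cor:disjointChains} asserts that the permutations of all classes except $[2,1]$ partition into disjoint chains. Here is the plan to prove it.

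The plan is to combine the local partition result of Lemma~\ref{lem:key2} with the global structure of the nearly spanning tree $T_{2n+1}$ from Theorem~\ref{thm:treeCompose}. First I would recall that a chain $c[\alpha]$ is obtained by starting from a necklace in $[1,2]$ and applying Lemma~\ref{prop:mergeClasses1} along the hyperedges of $T_{2n+1}$, so each chain contains exactly one necklace from each class $[x,y]\ne[2,1]$, and its length is $((2n+1)(2n)-1)(2n-1)$. The key counting fact is that the number of necklaces (subclasses) in any single class is $\frac{(2n-2)!}{2}$, and the same number of hyperedges of $T_{2n+1}$ is incident to the build-up of the chain structure; more precisely, I would argue that the set of all chains $\{c[\alpha] : \alpha \text{ a cyclic order on } [2n+1]\setminus\{1,2\}\}$ has exactly $\frac{(2n-2)!}{2}$ members, since distinct chains are determined by the cyclic order $\alpha$ of their necklace in $[1,2]$ and $c[\alpha_1]=c[\alpha_2]$ exactly when $\alpha_1\simeq\alpha_2$.

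Next I would show these chains are pairwise disjoint and together cover every permutation outside $[2,1]$. For disjointness: two chains $c[\alpha_1]$, $c[\alpha_2]$ with $\alpha_1\not\simeq\alpha_2$ contain different necklaces in the class $[1,2]$, and since each chain meets each class in exactly one necklace, I would trace along the tree $T_{2n+1}$ to see that at every class the necklaces chosen by the two chains are forced to differ — this uses the uniqueness of the merge of the three necklaces corresponding to a hyperedge (Lemma~\ref{lem:key1}), so that once the necklaces diverge in class $[1,2]$ they diverge everywhere. For covering: every permutation $\pi\notin[2,1]$ lies in some necklace $[\beta]-[x,y]$ in some class $[x,y]\ne[2,1]$; walking back along the unique path in the tree $T_{2n+1}$ from $[x,y]$ to the root $[1,2]$ and applying Lemma~\ref{lem:key2} at each hyperedge, the necklace $[\beta]-[x,y]$ is forced to be merged with a uniquely determined necklace of $[1,2]$, hence $\pi$ lies in exactly one chain. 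Finally, a count confirms consistency: the total number of even permutations outside $[2,1]$ is $\frac{(2n+1)!}{2} - \frac{(2n-1)!}{2} = \left((2n+1)(2n)-1\right)\cdot\frac{(2n-1)!}{2}$, while $\frac{(2n-2)!}{2}$ disjoint chains each of length $((2n+1)(2n)-1)(2n-1)$ account for $\frac{(2n-2)!}{2}\cdot((2n+1)(2n)-1)(2n-1) = ((2n+1)(2n)-1)\cdot\frac{(2n-1)!}{2}$ permutations — the two agree, so the chains indeed partition the permutations of all classes except $[2,1]$.

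The main obstacle I anticipate is making the "tracing along the tree" argument fully rigorous: one must verify that the necklace chosen by a chain in an arbitrary class $[x,y]$ is genuinely a well-defined function of $\alpha$ (independent of the order in which the hyperedges of $T_{2n+1}$ are processed), and that Lemma~\ref{lem:key1} applies at every hyperedge with the right "extra symbol" $z$ appended — i.e., that the cyclic-order bookkeeping $[\alpha,z]$ versus $[\alpha,y]$ versus $[\alpha,x]$ composes correctly along a path of length up to the diameter of $T_{2n+1}$. Once this consistency is established, disjointness and covering follow from the tree being connected and acyclic together with the disjoint-set decomposition of Lemma~\ref{lem:key2}, and the final count is the immediate sanity check.
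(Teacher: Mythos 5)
Your proposal is correct and follows essentially the same route as the paper: the corollary is obtained there by building the chains $c[\alpha]$ along the nearly spanning tree $T_{2n+1}$ and invoking Lemma~\ref{lem:key1}/Lemma~\ref{lem:key2}, whose $\alpha$-indexed merges make the necklace selected in every class a well-defined function of the cyclic order $\alpha$, exactly as you argue. Your added counting check and the explicit disjointness/covering discussion are consistent with, and merely elaborate on, the paper's (very brief) justification.
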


By Corollary~\ref{cor:disjointChains} we construct $\frac{(2n-2)!}{2}$
disjoint chains which span $A_{2n+1}$, except for all the even permutations of the class $[2,1]$.
Recall that we have the same number, $\frac{(2n-2)!}{2}$, of $[2,1]$-necklaces, which
span all the permutations of the class~$[2,1]$.
Now, we need a method to merge all these chains and necklaces, except
for one necklace from the class~$[2,1]$, into one $\mK$-snake $C_{2n+1}$.
Note that for $2n+1=5$ we have only one chain. Thus, this chain is the final $\mK$-snake $C_5$.
This $\mK$-snake is exactly the same $\mK$-snake as the one generated
by the recursive construction in Section~\ref{sec:recursive}.

\begin{lemma}
\label{lem:chainConnection}
Let $x$ be an integer such that $3\leq x\leq 2n+1$,
let $\alpha$ be a permutation of $[2n+1]\setminus \{x,2,1\}$,
and assume that the permutations $[\alpha,1,x,2]$ and $[\alpha,2,1,x]$ are contained in two distinct chains.
We can merge these two chains via the necklace $[\alpha,x]-[2,1]$.
\end{lemma}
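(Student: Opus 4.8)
The plan is to apply Lemma~\ref{prop:mergeClasses1} with the triple of classes $[2,1]$, $[1,x]$, and $[x,2]$, which indeed form a hyperedge $\langle 1,2,x\rangle$ in $H_{2n+1}$ (after reordering so the smallest index comes first). The S-skeleton is $\cT = t_{2n-1}^{2n-1}$, and we may write $\cT = \sigma \circ t_{2n-1}$ with $\sigma = t_{2n-1}^{2n-2}$; by Lemma~\ref{lem:helpLem1} this $\sigma$ satisfies $\sigma \leftrightsquigarrow t_{2n-1}^{2n-2}$ trivially. The replacement transition sequence from Lemma~\ref{prop:mergeClasses1} is $t_{2n+1}\circ \sigma \circ t_{2n+1}\circ \sigma \circ t_{2n+1}$, and it splices two necklaces from the classes $[1,x]$ and $[x,2]$ into a host cycle at a permutation whose last three ordered elements are $(x,2,1)$, immediately before the transition $t_{2n-1}$.

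First I would locate the splice point. The necklace $[\alpha,x]-[2,1]$ consists of the $2n-1$ cyclic shifts of $[\alpha,x,2,1]$; in particular it contains the permutation $\pi = [\alpha, x, \mathbf{2,1}]$, whose last three ordered elements are $(x,2,1)$, and the next permutation in that necklace (under $t_{2n-1}$) is $t_{2n-1}(\pi) = [x,\alpha,\mathbf{2,1}]$ — wait, we must instead realize this necklace as living inside the chain containing $[\alpha,2,1,x]$. The key observation is that $[\alpha,2,1,x]$ lies in class $[1,x]$ and $[\alpha,1,x,2]$ lies in class $[x,2]$, so by Lemma~\ref{lem:key1} (with $(x,y,z)$ there taken to be $(2,1,x)$ and the prefix permutation $\alpha$) the three necklaces $[\alpha,x]-[2,1]$, $[\alpha,2]-[1,x]$, and $[\alpha,1]-[x,2]$ can be merged. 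Now $[\alpha,2]-[1,x]$ is exactly the necklace containing $[\alpha,2,1,x]$, hence it sits in the chain $c_1$ containing that permutation; and $[\alpha,1]-[x,2]$ is the necklace containing $[\alpha,1,x,2]$, hence it sits in the distinct chain $c_2$.

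Next I would run the splice. Inside chain $c_1$, the necklace $[\alpha,2]-[1,x]$ contains a permutation ending in $(x,1)$ preceded by $2$; equivalently, by property (P4)/(Q3) applied within $c_1$ (its S-skeleton inherits (P3),(P4)) there is a permutation $\pi' \in c_1$ whose last three ordered elements are $(2,1,x)$, followed in $c_1$ by the transition $t_{2n-1}$ taking $\pi'$ to the first permutation of the $[\alpha,x]-[2,1]$ segment within the merged structure. Applying Lemma~\ref{prop:mergeClasses1} at this $\pi'$, with the triple $[1,x],[x,2],[2,1]$ in the cyclic roles $[x,y]=[1,x]$, $[y,z]=[x,2]$, $[z,x]=[2,1]$, we replace that $t_{2n-1}$ by $t_{2n+1}\circ\sigma\circ t_{2n+1}\circ\sigma\circ t_{2n+1}$. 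By the lemma this inserts, after $\pi'$, a $\mK$-snake over $[x,2]$ (namely a segment of chain $c_2$, since the two chains are disjoint and this necklace is the unique $[x,2]$-necklace compatible with the cyclic order) and a $\mK$-snake over $[2,1]$ (namely the necklace $[\alpha,x]-[2,1]$), and then returns to $\pi'$'s successor; the net effect is that the cycle previously decomposed as $c_1$, $c_2$, and the lone necklace $[\alpha,x]-[2,1]$ is replaced by a single directed simple cycle. Since Lemma~\ref{prop:mergeClasses1} only uses $t_{2n+1}$ and $t_{2n-1}$ (odd indices), the merged object is a $\mK$-snake by the result of~\cite{YeSc12} cited earlier.

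The main obstacle is bookkeeping the cyclic-order labels consistently: one has to check that the three necklaces actually appearing in the two named chains and the target necklace are precisely the triple $\{[\alpha,x]-[2,1],[\alpha,2]-[1,x],[\alpha,1]-[x,2]\}$ produced by Lemma~\ref{lem:key1}, so that Lemma~\ref{prop:mergeClasses1} applies with no ambiguity in $c[\alpha]$ — and that the hypothesis ``contained in two distinct chains'' guarantees no accidental coincidence (which would create two cycles rather than one). I would verify this by tracking the last three ordered symbols through each cyclic shift, exactly as in the displayed computation in the proof of Lemma~\ref{prop:mergeClasses1}; everything else is a direct invocation of Lemmas~\ref{lem:helpLem1}, \ref{prop:mergeClasses1}, and~\ref{lem:key1}.
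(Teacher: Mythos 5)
There is a genuine gap, and it is in the heart of the splice. You take the host to be the chain $c_1$ (containing $[\alpha,2,1,x]$) and replace the $t_{2n-1}$ after $\pi'=[\alpha,2,1,x]$ by $t_{2n+1}\circ\sigma\circ t_{2n+1}\circ\sigma\circ t_{2n+1}$ with $\sigma=t_{2n-1}^{2n-2}$, expecting Lemma~\ref{prop:mergeClasses1} to pull in "a segment of chain $c_2$" and the necklace $[\alpha,x]-[2,1]$ and thereby fuse $c_1$, $c_2$ and the necklace into one cycle. If you actually trace the permutations, the two pieces inserted between the three $t_{2n+1}$ steps are exactly the single necklaces $[\alpha,x]-[2,1]$ and $[\alpha,1]-[x,2]$ ($2n-1$ permutations each), because $t_{2n-1}^{2n-2}$ only runs once around one subclass. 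The necklace $[\alpha,1]-[x,2]$ is a proper sub-cycle of $c_2$, so the resulting cycle does not contain $c_2$; it merely duplicates one necklace of $c_2$, so you have neither merged the two chains nor kept a disjoint collection of $\mK$-snakes. (Relatedly, the hypotheses of Lemma~\ref{prop:mergeClasses1} are violated: the host $c_1$ already contains permutations of class $[x,2]$, since every chain contains one necklace from each class other than $[2,1]$, so the lemma cannot be invoked verbatim; Lemma~\ref{lem:key1} only tells you which three necklaces are compatible, not that a whole chain gets absorbed.)

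The missing idea, which is the content of the paper's proof, is that the matching sequences used in the splice must be the \emph{full transitions sequences of the two chains}, not the skeleton $t_{2n-1}^{2n-2}$ itself. The paper makes the $[2,1]$-necklace $\eta\ni\pi_3=[\alpha,x,2,1]$ the host, first argues that $\pi_1=[\alpha,1,x,2]$ and $\pi_2=[\alpha,2,1,x]$ are each followed by $t_{2n-1}$ inside their chains (because their $t_{2n+1}$-images lie in a different cycle, a step you only gesture at via (P4)/(Q3)), and then defines $\sigma_i$ so that $\sigma_i,t_{2n-1}$ is the whole cycle of $c_i$; Lemma~\ref{lem:helpLem1} gives $\sigma_i\leftrightsquigarrow t_{2n-1}^{2n-2}$, so replacing the $t_{2n-1}$ after $\pi_3$ by $t_{2n+1}\circ\sigma_1\circ t_{2n+1}\circ\sigma_2\circ t_{2n+1}$ traverses \emph{all} of $c_1$ and \emph{all} of $c_2$ before returning to $t_{2n-1}(\pi_3)$. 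Your argument could be repaired in the same spirit with $c_1$ as host, but only by using the full sequence of $c_2$ (a sequence matching $t_{2n-1}^{2n-2}$) in the second slot; as written, with $\sigma=t_{2n-1}^{2n-2}$ literally, the construction fails to merge the chains.
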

\begin{proof}
Let $c_1$ be the chain which contains the permutation $\pi_1=[\alpha,1,x,2]$,
$c_2$ be the chain which contains the permutation $\pi_2=[\alpha,2,1,x]$, and
$\eta$ be the necklace which contains the permutation $\pi_3=[\alpha,x,2,1]$.
Note that all the chains contains only the p-transitions $t_{2n+1}$ and $t_{2n-1}$.
The permutation $t_{2n+1}(\pi_1)$ appears in $c_2$,
the permutation $t_{2n+1}(\pi_2)$ appears in $\eta$,
and the permutation $t_{2n+1}(\pi_3)$ appears in $c_1$.
Therefore, $\pi_1$, $\pi_2$, and $\pi_3$ are followed
by $t_{2n-1}$ in $c_1$, $c_2$, and $\eta$, respectively.
Let $\sigma_i$, $i\in \{1,2\}$, be a transitions sequence such that $\sigma_i, t_{2n-1}$ is the transitions sequence of $c_i$,
and therefore $t_{2n-1}(\sigma_i(\pi_i))=\pi_i$.
By Lemma \ref{lem:helpLem1} we have $\sigma_1 \leftrightsquigarrow t_{2n-1}^{2n-2} \leftrightsquigarrow \sigma_2$.
Similarly to Lemma~\ref{prop:mergeClasses1},
by replacing the transition $t_{2n-1}$ which follows $\pi_3$ in $\eta$,
with $t_{2n+1} \circ \sigma_{1} \circ t_{2n+1} \circ \sigma_2 \circ t_{2n+1}$,
we merge $c_1$, $c_2$ and $\eta$ into a $\mK$-snake.
Thus, we have
\begin{align*}
\pi_3=&[a_1, a_2, \ldots,  a_{2n-2},x,2,1] \\
&\downarrow {t_{2n+1}}  \\
&[ 1, a_1, a_2, \ldots, a_{2n-2}, x, 2] \\
&\downarrow {\sigma_1 \leftrightsquigarrow t_{2n-1}^{2n-2}} \hphantom{aaaaaaaaaaaa}  the\ chain\ c_1 \\
\pi_1=&[ a_1, a_2, \ldots, a_{2n-2}, 1,x,2] \\
&\downarrow {t_{2n+1}}  \\
&[ 2, a_1, a_2, \ldots, a_{2n-2},1,x] \\
&\downarrow {\sigma_2 \leftrightsquigarrow t_{2n-1}^{2n-2}} \hphantom{aaaaaaaaaaaa}  the\ chain\ c_2 \\
\pi_2=&[ a_1, a_2, \ldots, a_{2n-2}, 2, 1, x] \\
&\downarrow {t_{2n+1}} \hphantom{aaaaaaa} return\ to\ the\ necklace\ \eta \\
t_{2n-1}(\pi_3) =&[x, a_1, a_2, \ldots, a_{2n-2}, 2,1]
\end{align*}
\end{proof}

For each $x$, $3\leq x\leq 2n+1$, and for each permutation $\alpha$ of $[2n+1]\setminus\{x,1,2\}$,
the merging of two distinct chains which contain the permutations
$[\alpha,1,x,2]$ and $[\alpha,2,1,x]$ via the necklace $[\alpha,x]-[2,1]$
as described in Lemma~\ref{lem:chainConnection}, will be denoted by $M[x]$-connection.
Note that if $x\in\{3,4,5\}$ then the permutations $[\alpha,1,x,2]$ and $[\alpha,2,1,x]$
are contained in the same chain.
Thus, there are no $M[3]$-connections, $M[4]$-connections, or $M[5]$-connections.

Lemma~\ref{lem:chainConnection} suggests a method
to join all the chains and all the $[2,1]$-necklaces except one into
a $\mK$-snake of length $\frac{(2n+1)!}{2}-(2n-1)$.
This should be implemented by $\frac{(2n-2)!}{2}-1$
iterations of the merging suggested by Lemma~\ref{lem:chainConnection}.
The current merging problem is also translated into a $3-graph$ problem (see Definition~\ref{def:3-graph}).
Let $\hat{H}_{2n+1}=(\hat{V}_{2n+1},\hat{E}_{2n+1})$ be a 3-graph defined as follows.
\begin{align*}
\hat{V}_{2n+1}&=\{c[\alpha]~:~\alpha \text{ is a permutation of } [2n+1]\setminus \{1,2\}\}\\
& \cup \{[\beta]-[2,1]~:\\
& \hphantom{\cup\{\}} \beta \text{ is a permutation of } [2n+1]\setminus \{1,2\}\} \\
\hat{E}_{2n+1}&=\{\{c[\alpha_1], c[\alpha_2], [\beta]-[2,1]\}~:\\
& \hphantom{=\{\}} c[\alpha_1] \text{ and } c[\alpha_2] \text{ can be merged together} \\
& \hphantom{=\{\}} \text{via } [\beta]-[2,1] \text{ by Lemma \ref{lem:chainConnection}}\}.
\end{align*}
The vertices in $\hat{V}_{2n+1}$ are of two types, chains and $[2,1]$-necklaces.
Each $e\in \hat{E}_{2n+1}$ contains three vertices, two chains and one necklace,
which can be merged together by Lemma \ref{lem:chainConnection}.
Therefore, the edge will be signed by $M[x]$ as described before.
Note that $\hat{E}_{2n+1}$ might contains parallel edges with different signs.

Let $\hat{T}_{2n+1}=(V_{\hat{T}_{2n+1}},E_{\hat{T}_{2n+1}})$ be a nearly spanning tree in $\hat{H}_{2n+1}$.
Note that such a nearly spanning tree must contain all the vertices in $\hat{V}_{2n+1}$
except for one $[2,1]$-necklace.
If such a nearly spanning tree exists then
by Lemma~\ref{lem:chainConnection},
we can merge all the chains via $[2,1]$-necklaces
to form the $\mK$-snake $C_{2n+1}$.
This $\mK$-snake contains all the permutations
of $A_{2n+1}$ except for $2n-1$ permutations
which form one $[2,1]$-necklace.

The joins which are performed are determined by the edges of $\hat{T}_{2n+1}$.
Note that there is a unique way to merge the three vertices
which correspond to a hyperedge of $\hat{T}_{2n+1}$ signed by $M[x]$.
Hence, by using the given spanning trees $T_{2n+1}$ and $\hat{T}_{2n+1}$,
there is no ambiguity in $C_{2n+1}$ (even so the orders of the joins are not unique).
However, different nearly spanning trees
can yield different final $\mK$-snakes.
Note that the $\mK$-snake $C_{2n+1}$ generated by this construction
has only $t_{2n+1}$ and $t_{2n-1}$ p-transitions, where usually $t_{2n-1}$ is used.
The p-transition $t_{2n-1}$ is the only transition in the $\mK$-snake of the subclasses.
On average $3$ out of $4n-2$ sequential p-transitions of $C_{2n+1}$ are the p-transition $t_{2n+1}$.
A similar property exists when a de Bruijn sequence is generated
from the necklaces of pure cycling register of order~$n$~\cite{EtLe84,Fred82}.

Finding a nearly spanning tree $\hat{T}_{2n+1}$ is an open question.
But, we found such trees for $n=3$ and $n=4$.
We believe that a similar construction to the
one which follows in the sequel for $n=3$ and $n=4$, exists for all $n>4$.

\begin{conjecture}
\label{conj:consB}
For each $n\geq 2$, there exists
a $(2n+1,M_{2n+1},\mK)$-snake,
where $M_{2n+1}=\frac{(2n+1)!}{2}-(2n-1)$
in which there are only $t_{2n-1}$ and $t_{2n+1}$ p-transitions.
\end{conjecture}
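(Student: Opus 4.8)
The plan is to reduce Conjecture~\ref{conj:consB} to a purely combinatorial statement about the $3$-graph $\hat{H}_{2n+1}$, namely that it contains a nearly spanning tree. Suppose such a tree $\hat{T}_{2n+1}$ exists. By Corollary~\ref{cor:disjointChains} the permutations of all classes except $[2,1]$ are partitioned into $\frac{(2n-2)!}{2}$ disjoint chains, and the class $[2,1]$ is partitioned into $\frac{(2n-2)!}{2}$ necklaces; these $(2n-2)!$ objects are exactly the vertices of $\hat{H}_{2n+1}$. Since a tree in a $3$-graph has an odd number of vertices and $(2n-2)!-1$ is odd for $n\geq 2$, the tree $\hat{T}_{2n+1}$ omits exactly one vertex, which we take to be a $[2,1]$-necklace, and it has $\frac{(2n-2)!}{2}-1$ hyperedges. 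Processing these hyperedges one at a time and applying Lemma~\ref{lem:chainConnection} at each step merges, without ambiguity, all chains and all but one $[2,1]$-necklace into a single $\mK$-snake $C_{2n+1}$; at every step the hyperedge being added has one vertex already present in the partial snake and two new ones, so connectivity is preserved and no cycle is created, exactly as in the proof of Theorem~\ref{thm:treeCompose}. The resulting $C_{2n+1}$ covers $A_{2n+1}$ minus a single $[2,1]$-necklace, hence has length $\frac{(2n+1)!}{2}-(2n-1)$, and since the chains use only $t_{2n+1}$ and $t_{2n-1}$ and the splice in Lemma~\ref{lem:chainConnection} introduces only these same two transitions, $C_{2n+1}$ uses only $t_{2n-1}$ and $t_{2n+1}$, as required.

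It therefore remains to produce a nearly spanning tree in $\hat{H}_{2n+1}$ for every $n\geq 2$. The base cases are quick: for $2n+1=5$ there is a single chain and no merging is needed, so $C_5$ is the chain itself (the $(5,57,\mK)$-snake of Example~\ref{Exm:T5}); for $n=3$ and $n=4$ one exhibits $\hat{T}_7$ and $\hat{T}_9$ explicitly by a finite search over the $M[x]$-signed hyperedges ($x\geq 6$), which we have verified. For the inductive step the natural idea is to mimic Theorem~\ref{thm:treeCompose}: the chains of $\hat{H}_{2n+1}$ refine those of $\hat{H}_{2n-1}$ under the insertion of the two new symbols $2n$ and $2n+1$ into the cyclic word $\alpha$, and one would look for a canonical family of $M[x]$-connections that attaches the newly created $[2,1]$-necklaces and the split pieces of the old chains onto a copy of $\hat{T}_{2n-1}$ while keeping the result connected and acyclic. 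A weaker first goal would be to prove that $\hat{H}_{2n+1}$, after deleting one $[2,1]$-necklace vertex, is connected --- i.e. any two chains are joined by a path of $M[x]$-connections through $[2,1]$-necklaces --- and then to upgrade connectivity to the existence of a nearly spanning tree.

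The main obstacle is exactly this last point. Unlike the de Bruijn situation, where a simple local preference rule on the pure cycling register of order $2n-1$ threads all necklaces into one cycle~\cite{EtLe84,Fred82}, there is no known analogous rule for $\hat{H}_{2n+1}$, and even connectivity of $\hat{H}_{2n+1}$ is not transparent for large $n$; moreover, in a $3$-graph connectivity alone need not guarantee a spanning (or nearly spanning) tree, so a correct recursion must carry a strong enough invariant to control how the vertices created at stage $2n+1$ attach in a cycle-free way. Designing such an invariant is the crux, and it is the reason the statement is posed as a conjecture rather than a theorem, with the explicit trees for $n\leq 4$ serving as the supporting evidence.
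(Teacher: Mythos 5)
Your write-up is not a proof, and it could not be: the statement you were given is posed in the paper as a conjecture, and the paper, exactly like you, establishes only the conditional part --- if all chains and all but one $[2,1]$-necklace can be threaded together by the splices of Lemma~\ref{lem:chainConnection}, i.e.\ if $\hat{H}_{2n+1}$ admits a nearly spanning tree, then the resulting code covers $A_{2n+1}$ minus one necklace, has length $\frac{(2n+1)!}{2}-(2n-1)$, and uses only $t_{2n-1}$ and $t_{2n+1}$ --- together with explicit trees for $n=3$ and $n=4$ (Example~\ref{ex:consB_7} for $S_7$, a computer search for $S_9$), the case $2n+1=5$ being the single chain of Example~\ref{Exm:T5}. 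So the gap you name, the existence of the nearly spanning tree for every $n$, is precisely the open problem and not a defect of your reduction; your counting of vertices and hyperedges, the observation that the omitted vertex must be a $[2,1]$-necklace, and the bookkeeping of which transitions appear all agree with the paper.

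Where the paper goes further than your sketch is in making concrete the ``strong enough invariant'' you say a recursion would need. It passes from $\hat{H}_{2n+1}$ to the labelled multigraph $\cG_{2n+1}$ on chains (nearly spanning trees in $\hat{H}_{2n+1}$ correspond to chain trees in $\cG_{2n+1}$, i.e.\ trees with pairwise distinct necklace labels), proves in Lemma~\ref{lem:2n+1structure} that $\cG_{2n+1}$ splits into $(2n-3)(2n-2)$ disjoint components isomorphic to $\cG_{2n-1}$, joined only by $M[2n]$- and $M[2n+1]$-edges (a consequence of $T_{2n-1}$ being a sub-tree of the $T_{2n+1}$ of Theorem~\ref{thm:treeCompose}), and then formulates Conjecture~\ref{conj:main}: for every component and every prescribed label $\eta$ there is a components spanning tree whose edge labels are distinct, lie in distinct components, and avoid $\eta$. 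The label-avoidance clause is exactly the recursion-carrying invariant (it allows one necklace to be excluded at each level), and Conjecture~\ref{conj:main} implies the statement you were asked about; but it, too, is only verified for $n=3,4$. In short, your argument is correct as far as it goes and follows the paper's route, but neither you nor the paper proves the statement: present your text as a reduction and as a restatement of the open problem, not as a proof.
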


\begin{example}
\label{ex:consB_7}
For $n=3$, a $(7,2515,\mK)$-snake is constructed by using the tree $T_7$ of Example \ref{Exm:T7},
and the tree $\hat{T}_7$ defined below.
$\hat{T}_7$ contains $12$ chains, where each chain contains $41$ necklaces.
It also contains $11$ $[2,1]$-necklaces and $11$ hyperedges.
Denote an edge in $\hat{H}_{7}$ by $(\{ c_i, c_j, \eta_k \}, x)$
where $M[x]$ is the sign of the edge.
$\hat{T}_{7}$ is defined as follows.
\begin{tabular}{ll}
\underline{The chains in $\hat{T}_{7}$:} & \tabularnewline
%\hline
$c_1\ =[3, 4, 5, 6, 7]-[1, 2]$, &
$c_2\ =[3, 4, 6, 7, 5]-[1, 2]$, \tabularnewline
$c_3\ =[3, 4, 7, 5, 6]-[1, 2]$, &
$c_4\ =[3, 5, 4, 7, 6]-[1, 2]$, \tabularnewline
$c_5\ =[3, 5, 6, 4, 7]-[1, 2]$, &
$c_6\ =[3, 5, 7, 6, 4]-[1, 2]$, \tabularnewline
$c_7\ =[3, 6, 4, 5, 7]-[1, 2]$, &
$c_8\ =[3, 6, 5, 7, 4]-[1, 2]$, \tabularnewline
$c_9\ =[3, 6, 7, 4, 5]-[1, 2]$, &
$c_{10}=[3, 7, 4, 6, 5]-[1, 2]$,\tabularnewline
$c_{11}=[3, 7, 5, 4, 6]-[1, 2]$,&
$c_{12}=[3, 7, 6, 5, 4]-[1, 2]$.
\end{tabular}
\begin{tabular}{ll}
\underline{The necklaces in $\hat{T}_{7}$:} & \tabularnewline
%\hline
$\eta_1\ =[3, 4, 5, 7, 6]-[2, 1]$, &
$\eta_2\ =[3, 4, 6, 5, 7]-[2, 1]$, \tabularnewline
$\eta_3\ =[3, 4, 7, 6, 5]-[2, 1]$, &
$\eta_4\ =[3, 5, 4, 6, 7]-[2, 1]$, \tabularnewline
$\eta_5\ =[3, 5, 6, 7, 4]-[2, 1]$, &
$\eta_6\ =[3, 5, 7, 4, 6]-[2, 1]$, \tabularnewline
$\eta_7\ =[3, 6, 4, 7, 5]-[2, 1]$, &
$\eta_8\ =[3, 6, 5, 4, 7]-[2, 1]$, \tabularnewline
$\eta_9\ =[3, 6, 7, 5, 4]-[2, 1]$, &
$\eta_{10}=[3, 7, 4, 5, 6]-[2, 1]$, \tabularnewline
$\eta_{11}=[3, 7, 5, 6, 4]-[2, 1]$.
\end{tabular}
\begin{tabular}{ll}
\underline{The edges in $\hat{T}_{7}$:} & \tabularnewline
%\hline
$e_1\ =(\{ c_{11}, c_6\hphantom{\scriptsize{i}}, \eta_9\hphantom{\scriptsize{i}} \}, 6)$, &
$e_2\ =(\{ c_6\hphantom{\scriptsize{i}}, c_1\hphantom{\scriptsize{i}}, \eta_2\hphantom{\scriptsize{i}} \}, 6)$, \tabularnewline
$e_3\ =(\{ c_2\hphantom{\scriptsize{i}}, c_{12}, \eta_{11} \}, 6)$, &
$e_4\ =(\{ c_{12}, c_7\hphantom{\scriptsize{i}}, \eta_4\hphantom{\scriptsize{i}} \}, 6)$, \tabularnewline
$e_5\ =(\{ c_5\hphantom{\scriptsize{i}}, c_3\hphantom{\scriptsize{j}}, \eta_3\hphantom{\scriptsize{i}} \}, 6)$, &
$e_6\ =(\{ c_3\hphantom{\scriptsize{i}}, c_4\hphantom{\scriptsize{i}}, \eta_7\hphantom{\scriptsize{i}} \}, 6)$, \tabularnewline
$e_7\ =(\{ c_9\hphantom{\scriptsize{i}}, c_{10}, \eta_{10} \}, 6)$, &
$e_8\ =(\{ c_{10}, c_8\hphantom{\scriptsize{i}}, \eta_5\hphantom{\scriptsize{i}} \}, 6)$, \tabularnewline
$e_9\ =(\{ c_{12}, c_9\hphantom{\scriptsize{i}}, \eta_8\hphantom{\scriptsize{i}} \}, 7)$, &
$e_{10}=(\{ c_9\hphantom{\scriptsize{i}}, c_3\hphantom{\scriptsize{i}}, \eta_1\hphantom{\scriptsize{i}} \}, 7)$, \tabularnewline
$e_{11}=(\{ c_2\hphantom{\scriptsize{i}}, c_{11}, \eta_6\hphantom{\scriptsize{i}} \}, 7)$.
\end{tabular}

$\hat{H}_{7}$ contains another $[2,1]$-necklace, $\eta_{12}=[3, 7, 6, 4, 5]-[2, 1]$,
and the following additional edges:
\begin{align*}
\begin{tabular}{ll}
$e_{12}=(\{ c_1\hphantom{\scriptsize{'}}, c_{11}, \eta_{12} \}, 6),$
&
$e_{13}=(\{ c_7\hphantom{\scriptsize{i}}, c_2\hphantom{\scriptsize{i}}, \eta_1\hphantom{\scriptsize{i}} \}, 6)$, \tabularnewline
$e_{14}=(\{ c_4\hphantom{\scriptsize{i}}, c_5\hphantom{\scriptsize{1}}, \eta_8\hphantom{\textbf{\scriptsize{i}}} \}, 6)$,  &
$e_{15}=(\{ c_8\hphantom{\scriptsize{i}}, c_9\hphantom{\scriptsize{i}}, \eta_6\hphantom{\scriptsize{i}} \}, 6)$, \tabularnewline
$e_{16}=(\{ c_{10}, c_2\hphantom{\scriptsize{i}}, \eta_2\hphantom{\scriptsize{i}} \}, 7)$,  &
$e_{17}=(\{ c_8\hphantom{\scriptsize{i}}, c_1\hphantom{\scriptsize{i}}, \eta_3\hphantom{\scriptsize{i}} \}, 7)$, \tabularnewline
$e_{18}=(\{ c_{11}, c_{10}, \eta_4\hphantom{\scriptsize{'}} \}, 7)$,  &
$e_{19}=(\{ c_3\hphantom{\scriptsize{i}}, c_{12}, \eta_5\hphantom{\scriptsize{i}} \}, 7)$,  \tabularnewline
$e_{20}=(\{ c_6\hphantom{\scriptsize{i}}, c_7\hphantom{\scriptsize{i}}, \eta_7\hphantom{\scriptsize{i}} \}, 7)$, &
$e_{21}=(\{ c_4\hphantom{\scriptsize{i}}, c_8\hphantom{\scriptsize{i}}, \eta_9\hphantom{\scriptsize{i}} \}, 7)$, \tabularnewline
$e_{22}=(\{ c_1\hphantom{\scriptsize{i}}, c_4\hphantom{\scriptsize{i}}, \eta_{10} \}, 7)$,  &
$e_{23}=(\{ c_5\hphantom{\scriptsize{i}}, c_6\hphantom{\scriptsize{i}}, \eta_{11} \}, 7)$, \tabularnewline
$e_{24}=(\{ c_7\hphantom{\scriptsize{i}}, c_5\hphantom{\scriptsize{i}}, \eta_{12} \}, 7)$.
\end{tabular}
\end{align*}
An additional different illustration of $\hat{H}_{7}$ is presented in the sequel (see Example~\ref{ex:g_7}).
\end{example}

For each $n\geq 3$,
let $\cG_{2n+1}=(\cV_{2n+1},\cE_{2n+1})$ be a multi-graph (with parallel edges) with labels and signs on the edges.
The vertices of $\cV_{2n+1}$ represent the $\frac{(2n-2)!}{2}$ chains
and hence $|\cV_{2n+1}|=\frac{(2n-2)!}{2}$.
There is an edge signed with $M[x]$, where $6\leq x\leq 2n+1$,
between the vertex (chain) $c_1$ and vertex (chain) $c_2$,
if $c_1$ contains a permutation $[\alpha,2,1,x]$
and $c_2$ contains the permutation $[\alpha,1,x,2]$, where $c_1\ne c_2$.
The label on this edge is the necklace $[\alpha,x]-[2,1]$.
Note that the label on the edge is a necklace
which can merge together the chains of its corresponding endpoints by $M[x]$-connection.
Note also that the pair $\alpha, x$ might not be unique
and hence the graph might have parallel edges.
A tree in $\cG_{2n+1}$ which doesn't have two edges with the same label,
will be called \emph{a chain tree}.
The following Lemma can be easily verified.
\begin{lemma}
There exists a nearly spanning tree in $\hat{H}_{2n+1}$
if and only if there exists a chain tree in $\cG_{2n+1}$.
\end{lemma}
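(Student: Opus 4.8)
The plan is to observe that the hyperedges of $\hat{H}_{2n+1}$ and the labelled edges of $\cG_{2n+1}$ are the same objects, and then to check that this identification sends nearly spanning trees of $\hat{H}_{2n+1}$ to chain trees of $\cG_{2n+1}$ and back. Indeed, a hyperedge of $\hat{H}_{2n+1}$ consists of two chain-vertices $c[\alpha_1],c[\alpha_2]$, one $[2,1]$-necklace $[\beta]-[2,1]$, and a sign $M[x]$ recording by which connection the three are merged; and $\cG_{2n+1}$ has, for exactly this data, an edge between $c[\alpha_1]$ and $c[\alpha_2]$ signed $M[x]$ with label $[\beta]-[2,1]$. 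So I would fix the two maps: a set of hyperedges \emph{projects} to a set of edges by forgetting the necklace-vertex, and a set of edges \emph{lifts} to a set of hyperedges by promoting the label to the missing third vertex; these are mutually inverse, and distinct hyperedges project to distinct edges, since the triple together with the sign recovers the label. Here, as the intended application requires, a chain tree is taken to span all $\frac{(2n-2)!}{2}$ chain-vertices and to have pairwise distinct edge labels.

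The single fact I would establish first, and then use on both sides, is a degree count for necklace-vertices. If $\hat{T}$ is a nearly spanning tree of $\hat{H}_{2n+1}$ with $\ell$ hyperedges, it has $2\ell+1$ vertices; its unique omitted vertex cannot be a chain (else some necklace would be incident to no hyperedge), so $\hat{T}$ contains all $\frac{(2n-2)!}{2}$ chains and $\frac{(2n-2)!}{2}-1$ necklaces, forcing $\ell=\frac{(2n-2)!}{2}-1$. Since each hyperedge contains exactly one necklace, $\ell$ necklace-incidences are shared among $\ell$ necklace-vertices, each incident to at least one hyperedge; hence \emph{every necklace-vertex of $\hat{T}$ lies in exactly one of its hyperedges}, i.e.\ has degree $1$. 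The mirror statement holds for a chain tree $\Theta$: its $\frac{(2n-2)!}{2}-1$ edges carry pairwise distinct labels, so in the lift each necklace is attached to exactly one hyperedge.

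For the forward implication, from $\hat{T}$ I would form its projection $G$ onto $\cG_{2n+1}$: it has $\frac{(2n-2)!}{2}-1$ distinct edges and meets every chain. Because the necklace-vertices of $\hat{T}$ have degree $1$, a path of $\hat{T}$ between two chains cannot use a necklace as an interior vertex, so it runs through chain-vertices only; along it each hyperedge has its two chains consecutive and hence projects to the corresponding edge of $G$, which proves $G$ connected, so a spanning tree, and its labels are the $\ell$ distinct necklaces of $\hat{T}$, so $G$ is a chain tree. For the converse, from a chain tree $\Theta$ I would form its lift $\widehat{\Theta}$: it has $\frac{(2n-2)!}{2}-1$ hyperedges and, as vertices, all chains together with the $\frac{(2n-2)!}{2}-1$ distinct label-necklaces, i.e.\ $(2n-2)!-1=2\ell+1$ vertices, the correct count for a $3$-graph tree with $\ell$ edges, omitting exactly one $[2,1]$-necklace. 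Connectivity is immediate. For acyclicity, a cycle of $\widehat{\Theta}$ would have all its vertices of degree $\ge 2$, hence would avoid the degree-$1$ necklace-vertices; its vertex sequence would then consist of chains, each consecutive pair being the two chains of the corresponding hyperedge, so projecting the cycle to $\cG_{2n+1}$ would give a closed walk on distinct vertices using distinct edges of $\Theta$ — a cycle in $\Theta$ (or, if the walk has length two, a pair of parallel edges), impossible in a tree. Thus $\widehat{\Theta}$ is a tree, and being one necklace short of spanning, a nearly spanning tree.

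I expect no deep obstacle: the argument is a transfer between a $3$-graph and its ``chain graph'', and once the degree-$1$ property of necklace-vertices is proved both implications are short. The points needing care are purely bookkeeping — carrying the signs $M[x]$ faithfully through the correspondence, allowing for parallel hyperedges in $\hat{H}_{2n+1}$ and parallel edges in $\cG_{2n+1}$, and pinning down that a nearly spanning tree necessarily omits a necklace rather than a chain.
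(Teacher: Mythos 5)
Your proof is correct, and since the paper dismisses this lemma with ``the following Lemma can be easily verified,'' your argument is precisely the intended verification: the bijection between signed hyperedges of $\hat{H}_{2n+1}$ and labelled signed edges of $\cG_{2n+1}$, the observation that a nearly spanning tree must omit a $[2,1]$-necklace and hence has all necklace-vertices of degree one, and the resulting transfer of trees in both directions. Your explicit reading of ``chain tree'' as a \emph{spanning} tree of $\cG_{2n+1}$ with pairwise distinct labels is indeed what the paper intends (and is needed for the equivalence), so flagging it was the right call.
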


Henceforth, $T_{2n+1}$ will be the nearly spanning tree constructed in Theorem \ref{thm:treeCompose},
and the chains are constructed via~$T_{2n+1}$.
\begin{definition}
Let $\cG_1=(\cV_1, \cE_1)$ and $\cG_2=(\cV_2,\cE_2)$ be two multi-graphs with labels and signs on the edges,
where the set of the labels of $\cG_i$ denoted by $\cL_i$, $i\in \{1,2\}$.
We say that
$\cG_1$ is isomorphic to $\cG_2$ if there exist two bijective functions $f: \cV_1\to \cV_2$ and $g: \cL_1 \to \cL_2$,
with the following property:
$(u,v)\in \cE_1$ with the label $\eta$ and sign $M[x]$, if and only if
$(f(u),f(u)) \in \cE_2$ with the label $g(\eta)$ and sign $M[x]$.
\end{definition}
\begin{definition}
For each $n\geq 4$, a sub-graph of $\cG_{2n+1}$ which is isomorphic to $\cG_{2n-1}$
is called a component of $\cG_{2n+1}$,
and denoted by $A=(\cV_A,\cL_A)$
where $\cV_A$ consists of the vertices (chains) of the component,
$\cL_A$ consists of the labels ($[2,1]$-necklaces) on the edges in the component.
Note that $|\cV_A|=|\cL_A|$, i.e., the numbers of the distinct labels is equal to the number of the vertices.
\end{definition}
\begin{definition}
Two components, $A=(\cV_A,\cL_A)$ and $B=(\cV_B,\cL_B)$, in $\cG_{2n+1}$ are called disjoint
if $\cV_A\cap \cV_B =\varnothing$ and $\cL_A\cap \cL_B =\varnothing$, i.e., there is no a common vertex (chain) or a common label ($[2,1]$-necklace) in $A$ and $B$.
\end{definition}
\begin{lemma}
\label{lem:2n+1structure}
For each $n\geq 4$, $\cG_{2n+1}$ consists of $(2n-3)(2n-2)$ disjoint copies
of isomorphic graphs to $\cG_{2n-1}$, called components.
The edges between the vertices of two distinct components are signed
only with $M[2n]$ and $M[2n+1]$.
\end{lemma}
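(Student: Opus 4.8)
The plan is to exhibit the decomposition explicitly through a combinatorial encoding of the vertices and edge-labels of $\cG_{2n+1}$ by cyclic orders on the $2n-1$ symbols $\{3,\ldots,2n+1\}$. Recall that a chain is a vertex $c[\alpha]$ indexed by the cyclic order $\alpha$ carried by its class-$[1,2]$ necklace, and an edge-label is a $[2,1]$-necklace $[\mu]-[2,1]$ indexed by a cyclic order $\mu$; in both cases a parity constraint (the relevant necklace must be even) picks out exactly half of the $(2n-2)!$ cyclic orders, which matches $|\cV_{2n+1}|=\tfrac{(2n-2)!}{2}$. First I would describe a cyclic order $\alpha$ of $\{3,\ldots,2n+1\}$ by the triple consisting of: (i) the cyclic order $\bar\alpha$ of $\{3,\ldots,2n-1\}$ obtained by deleting the symbols $2n$ and $2n+1$; (ii) the gap of $\bar\alpha$ containing $2n+1$ ($2n-3$ choices); and (iii) the gap containing $2n$ in the resulting cyclic order of $\{3,\ldots,2n-1,2n+1\}$ ($2n-2$ choices), gaps being counted from a fixed reference symbol. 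Call the pair (ii)--(iii) the \emph{position data}; it takes exactly $(2n-3)(2n-2)$ values, and a component $A=(\cV_A,\cL_A)$ is defined to consist of all chains with a fixed value of the position data, together with the labels appearing on their internal edges. (The smallest case is $n=4$, where the count predicts $30$ components of $\cG_9$, each a copy of the $12$-vertex graph $\cG_7$ of Example~\ref{ex:consB_7}, and indeed $30\cdot 12=\tfrac{6!}{2}=|\cV_9|$.)

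Next I would check that the fibres of the position-data map are vertex sets of copies of $\cG_{2n-1}$. Inserting $2n$ and $2n+1$ into prescribed gaps changes the inversion count of $[\alpha,1,2]$ by an amount that depends only on the position data, so for each fixed value the chains carrying it are indexed precisely by the cyclic orders $\bar\alpha$ of $\{3,\ldots,2n-1\}$ of one fixed parity; composing, where needed, with a parity-reversing relabelling that is an automorphism of $\cG_{2n-1}$ (for instance the transposition of the symbols $3$ and $4$), this set is put in bijection with $\cV_{2n-1}$, and $(2n-3)(2n-2)\cdot\tfrac{(2n-4)!}{2}=\tfrac{(2n-2)!}{2}=|\cV_{2n+1}|$. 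The heart of the proof is the edge analysis. I would first derive, by tracing the joins of Lemma~\ref{prop:mergeClasses1} along the tree $T_{2n+1}$ of Theorem~\ref{thm:treeCompose} and invoking Lemma~\ref{lem:key1} at each step, an explicit rule that expresses, for every class $[x,y]\neq[2,1]$, the cyclic order carried by the class-$[x,y]$ necklace of $c[\alpha]$ in terms of $\alpha$; the crucial feature to be verified is that this rule only relocates the symbols $x,y$ (and $1,2$) and leaves the cyclic arrangement of the remaining symbols unchanged, in particular the positions of $2n$ and $2n+1$ relative to one another and to the rest. Feeding this rule into the $M[x]$-connection of Lemma~\ref{lem:chainConnection}, which links the chains containing $[\alpha,2,1,x]$ and $[\alpha,1,x,2]$ through the label $[\alpha,x]-[2,1]$, one reads off the encodings of the two endpoints and of the label. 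For $6\le x\le 2n-1$ none of the manipulations touches $2n$ or $2n+1$, so the two endpoints and the label all share one value of the position data and the edge is exactly an edge of the copy of $\cG_{2n-1}$ over that value, with the same sign $M[x]$ and corresponding label; conversely every edge of $\cG_{2n-1}$ lifts, so the fibre is isomorphic to $\cG_{2n-1}$, while disjointness of the $\cL_A$ follows because the label of an internal $M[x]$-edge inherits the position data of its endpoints. For $x\in\{2n,2n+1\}$ the connection moves the distinguished symbol $x$ and hence may alter the position data, which is exactly the assertion that edges between distinct components carry only the signs $M[2n]$ and $M[2n+1]$.

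The step I expect to be the main obstacle is the explicit necklace-in-chain rule together with the proof that it leaves the relative positions of $2n$ and $2n+1$ invariant under every class-to-class move dictated by $T_{2n+1}$: this forces a careful look at how Theorem~\ref{thm:treeCompose} routes the joins and at the precise effect of each application of Lemma~\ref{lem:key1} on the underlying cyclic sequence. A second, more routine, difficulty is the bookkeeping of the two parity constraints (on the chain index $\alpha$ and on the label index $\mu$) so that the bijections $f$ and $g$ witnessing the isomorphism are well defined and sign-compatible, and the accompanying check that the labels of internal $M[x]$-edges exhaust $\cL_A$, so that each component really is a copy of all of $\cG_{2n-1}$ and not merely of a subgraph.
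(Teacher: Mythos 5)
Your overall strategy (delete the symbols $2n,2n+1$, use that for $x\le 2n-1$ the relevant tree paths lie in $T_{2n-1}\subseteq T_{2n+1}$) is the paper's strategy, but the explicit decomposition you build it on is wrong, and the step you yourself flag as ``the crucial feature to be verified'' is false. Passing from the $[1,2]$-necklace of a chain to its $[x,y]$-necklace along the tree path does \emph{not} merely relocate $x,y,1,2$: each hyperedge join of Lemma~\ref{lem:key1} deletes one symbol from its slot and inserts a \emph{different} symbol into that slot, so intermediate path symbols move relative to the rest, and hence the neighbours of $2n$ and $2n+1$ among $\{3,\ldots,2n-1\}$ change. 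Concretely, in $\cG_9$ take $\beta=[3,4,5,8,9,7]$. The chain $c_1$ containing $[\beta,2,1,6]$ is found by tracing the $T_7$-path $[1,6]\to[5,1]\to[1,2]$ (hyperedges $\langle 1,6,5\rangle$, $\langle 1,2,5\rangle$, i.e.\ substitute $5\to6$ then $2\to5$), giving the $[1,2]$-necklace $(3,4,6,8,9,7,5)$; the chain $c_2$ containing $[\beta,1,6,2]$ is found via $[6,2]\to[2,3]\to[1,2]$ (hyperedges $\langle 2,3,6\rangle$, $\langle 1,2,3\rangle$, i.e.\ $3\to6$ then $1\to3$), giving $(6,4,5,8,9,7,3)$. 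These are distinct chains joined by an $M[6]$-edge, yet after deleting $8,9$ the block $8,9$ lies in the gap between $6$ and $7$ of $(3,4,6,7,5)$ for $c_1$ and in the gap between $5$ and $7$ of $(3,6,4,5,7)$ for $c_2$ (the gaps after the third, respectively fourth, symbol counted from the reference symbol $3$). So your position data changes along an edge of sign $M[6]$, its fibres are not unions of components, and the proposed partition violates the very conclusion of the lemma. The only thing of this kind that survives such a connection is the cyclic separation between $2n$ and $2n+1$ (they keep their slots under the slot identification induced by the shared $\beta$), and that invariant has only $2n-2$ values, too coarse to define the $(2n-3)(2n-2)$ components.

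The paper's argument does not attempt an absolute positional invariant: it observes that for $x\le 2n-1$ the path between $[1,x]$ and $[x,2]$ in $T_{2n+1}$ equals the path in $T_{2n-1}$, so every connection of sign at most $2n-1$ involves only substitutions of symbols $\le 2n-1$; deleting $2n$ and $2n+1$ from all necklaces of a chain therefore carries these connections exactly onto the connection pattern of $\cG_{2n-1}$, and the components are the pieces generated by those connections (with the count obtained by dividing $\frac{(2n-2)!}{2}$ by $\frac{(2n-4)!}{2}$), not fibres of a statistic read off $\alpha$ alone. Separately, your parity repair is unsupported: the transposition $3\leftrightarrow4$ is not an automorphism of $\cG_{2n-1}$, since already $T_5$ is not invariant under it (it contains $\langle 1,4,5\rangle$ but not $\langle 1,3,5\rangle$), so it need not map chains to chains; handling the fibres of ``wrong'' parity needs a different argument (this point is in fact glossed over by the paper as well).
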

\begin{proof}
The $M[x]$-connections are
deduced by the tree $T_{2n+1}$, which was used for the construction of the chains.
In particular, the path between the vertices $[1,x]$ and $[x,2]$
in $T_{2n+1}$ determines the $M[x]$-connections in~$\cG_{2n+1}$.
By Theorem~\ref{thm:treeCompose}, $T_{2n-1}$ is a sub-graph of $T_{2n+1}$.
Therefore, for each $x$, $x\geq 3$,
the path between the vertices $[1,x]$ and $[x,2]$ in $T_{2n+1}$ is equal to
the path between the vertices $[1,x]$ and $[x,2]$ in $T_{2k+1}$ for each ${x\leq 2k+1\leq 2n+1}$.
The number of the vertices (chains) in $\cG_{2n+1}$ is equal to $\frac{(2n-2)!}{2}$,
and each component contains $\frac{(2n-4)!}{2}$ vertices.
Thus, $\cG_{2n+1}$ consists of $(2n-3)(2n-2)$ disjoint copies
of isomorphic graphs to $\cG_{2n-1}$ connected by edges signed only with $M[2n]$ and $M[2n+1]$.
\end{proof}

For each $n\geq 4$, let $\hat{\cG}_{2n+1}=(\hat{\cV}_{2n+1},\hat{\cE}_{2n+1})$ be the \emph{component graph} of
$\cG_{2n+1}$.
The vertices of $\hat{\cV}_{2n+1}$ represent the components of $\cG_{2n+1}$,
There is an edge signed with $M[x]$, $x\in \{2n,2n+1\}$,
between the vertices (components) $A$ and $B$,
if the chain that contains the permutation $[\alpha,2,1,x]$ is contained in $A$,
and the chain that contains the permutation $[\alpha,1,x,2]$ is contained in $B$.
The label on this edge is the necklace $[\alpha,x]-[2,1]$.
We define $\hat{\cG}_{7}$ to be $\cG_{7}$, i.e., each component of $\hat{\cG}_{7}$ consists of exactly one chain
(and also one distinct $[2,1]$-necklace in order to follow
the properties of $\hat{\cG}_{2n+1}$).
\begin{definition}
A components spanning tree, $\hat{T}_{2n+1}$
is a spanning tree in $\hat{\cG}_{2n+1}$,
where in the set of the labels of the tree's edges,
there are no two labels from the same component, i.e.,
each label in the set of the labels of the tree's edges belongs to a different component.
\end{definition}
\begin{example}
\label{ex:g_7}
$\hat{\cG}_{7}$ is depicted in Figure~\ref{fig:hatG7}, where the vertices numbers and the edges labels
corresponds to the chains and the necklaces in Example \ref{ex:consB_7}, respectively.
The vertical edges are signed with $M[6]$,
while the horizontal edges are signed with $M[7]$.
The double lines edges correspond to the edges of $\hat{T}_7$.
\begin{figure}[H]
\centering
\vspace{-10pt}
\begin{tikzpicture} [-,scale=0.6,node distance=3cm,
  thick,main node/.style={font={\normalsize}, circle,draw,minimum size=25 pt,scale=0.6}
  ,edge_node/.style={font={\normalsize},scale=0.6}
  ,edgeTree_node/.style={font={\large},scale=0.6}]

  \node[main node] (1) {1};
  \node[main node] (11) [below of=1] {11};
  \node[main node] (6) [below of=11] {6};

  \node[main node] (2) [right of=1] {2};
  \node[main node] (12) [below of=2] {12};
  \node[main node] (7) [below of=12] {7};

  \node[main node] (3) [right of=2] {3};
  \node[main node] (4) [below of=3] {4};
  \node[main node] (5) [below of=4] {5};

  \node[main node] (8) [right of=3] {8};
  \node[main node] (9) [below of=8] {9};
  \node[main node] (10) [below of=9] {10};

  \path[]
    (1) edge node[edge_node] [left, xshift=.1cm] {12} (11)

    (7) edge [bend left] node[edge_node] [left,xshift=.1cm] {1} (2)

    (4) edge node[edge_node] [left,xshift=.1cm] {8} (5)

    (8) edge node[edge_node] [right,xshift=-.1cm] {6} (9);

  \path[]
  %tree connections
    %M[6] - connections
    (11) edge[double, thin] node[edgeTree_node] [left,xshift=.1cm] {9} (6)
    (6) edge [double, thin,bend left] node[edgeTree_node] [left,xshift=.1cm] {2} (1)

    (2) edge[double,thin] node[edgeTree_node] [left,xshift=.1cm, yshift=-.6cm] {11} (12)
    (12) edge[double,thin] node[edgeTree_node] [left,xshift=.1cm] {4} (7)

    (5) edge [double,thin,bend left] node[edgeTree_node] [left,xshift=.1cm] {3} (3)
    (3) edge[double,thin] node[edgeTree_node] [left,xshift=.1cm, yshift=-.1cm] {7} (4)

    (9) edge[double,thin] node[edgeTree_node] [right,xshift=-.1cm] {10} (10)
    (10) edge [double,thin,bend left] node[edgeTree_node] [right,xshift=-.1cm] {5} (8)

    %M[7] connections
    (2) edge[double,thin] node[edge_node] [above] {6} (11)

    (12) edge[double,thin,bend right] node[edge_node] [above, xshift=.7cm] {8} (9)
    (9) edge[double,thin] node[edge_node] [above, xshift=-0.8cm, yshift=0.8cm] {1} (3);

  \path[]
    (6) edge node[edge_node] [above,yshift=-.1cm] {7} (7)
    (7) edge node[edge_node] [above,yshift=-.1cm] {12} (5)
    (5) edge [bend left] node[edge_node] [above,yshift=-.1cm] {11} (6)

    (1) edge node[edge_node] [above, xshift=-2cm, yshift=1cm] {10} (4)
    (4) edge node[edge_node] [right, xshift=0.5cm, yshift=0.9cm] {9} (8)
    (8) edge[bend right=20] node[edge_node] [above] {3} (1)

    (11) edge node[edge_node] [above] {4} (10)
    (10) edge [bend right] node[edge_node] [above,xshift=-.4cm,yshift=.4cm] {2} (2)

    (3) edge node[edge_node] [above] {5} (12);

`\end{tikzpicture}
\vspace{-5pt}
\caption{The graph $\hat{\cG}_7$ and its component spanning tree $\hat{T}_7$} \label{fig:hatG7}
\end{figure}
\end{example}

\begin{conjecture}
\label{conj:main}
For each component $A$ in $\hat{\cG}_{2n+1}$, $n\geq 3$,
and for each label $\eta$ of $A$,
there exists a components spanning tree,
where there is no edge in the tree with the label $\eta$.
\end{conjecture}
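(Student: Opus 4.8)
I would argue by induction on $n$, the key reduction being a \emph{decomposition principle}. For $n\geq 4$, Lemma~\ref{lem:2n+1structure} exhibits $\cG_{2n+1}$ as the disjoint union of $(2n-3)(2n-2)$ copies of $\cG_{2n-1}$ — which partitions the chains and, with them, the $[2,1]$-necklaces — glued only by the $M[2n]$- and $M[2n+1]$-edges, that is, by the edges of $\hat{\cG}_{2n+1}$. Consequently a components spanning tree $\hat{T}_{2n+1}$ of $\hat{\cG}_{2n+1}$, together with a spanning chain tree of each component $A$ that avoids the at most one label $\hat{T}_{2n+1}$ borrows from $\cL_A$, glues to a spanning chain tree of $\cG_{2n+1}$ — which is what the $\mK$-snake construction requires (Lemma~\ref{lem:chainConnection}, Corollary~\ref{cor:disjointChains}, and the $\mK$-snake $C_{2n+1}$ of Conjecture~\ref{conj:consB}). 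Since a components spanning tree uses $(2n-3)(2n-2)-1$ labels from pairwise distinct components, exactly one component contributes no label, so the tree avoids a prescribed $\eta\in\cL_A$ as soon as the label it takes from $\cL_A$, if any, is $\neq\eta$. I would carry two assertions together: $Q(n)$, the statement itself, and $P(n)$, ``$\cG_{2n+1}$ has a spanning chain tree omitting any prescribed necklace''; the principle above shows $Q(n)\wedge P(n-1)\Rightarrow P(n)$.

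The base cases $n=3,4$ are finite checks. When $n=3$, $\hat{\cG}_7=\cG_7$ is the explicit twelve-vertex graph of Example~\ref{ex:consB_7} and Figure~\ref{fig:hatG7}, each component being a single chain carrying a single label; a components spanning tree is then simply a spanning tree with eleven pairwise distinct edge-labels, necessarily omitting exactly one of the twelve necklaces, and Example~\ref{ex:consB_7} realizes the omitted one as $\eta_{12}$. The remaining eleven choices follow by a short enumeration, or by using the symmetry among the four ``columns'' of Figure~\ref{fig:hatG7} together with the rotation of each $M[6]$-cycle; the same bookkeeping on the data behind the successful $S_9$ construction settles $n=4$.

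For the inductive step, assume $Q(m)$ and $P(m)$ for $m<n$, and fix a component $A$ of $\cG_{2n+1}$ and a label $\eta\in\cL_A$. Since $T_{2n-1}$ sits inside $T_{2n+1}$ by Theorem~\ref{thm:treeCompose}, the $M[x]$-connections with $x\leq 2n-1$ are inherited unchanged, so $\hat{\cG}_{2n+1}$ is itself built from $\hat{\cG}_{2n-1}$-type pieces joined through its $M[2n]$- and $M[2n+1]$-edges, and one can attempt to assemble the desired tree hierarchically. Concretely, start from any components spanning tree $\hat{T}$ of $\hat{\cG}_{2n+1}$, and let $A_0$ be the single component contributing no label to $\hat{T}$. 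If $\hat{T}$ avoids $\eta$ we are done; otherwise $\eta$ labels an edge $e$ of $\hat{T}$, and deleting $e$ splits $\hat{T}$ into subtrees on vertex sets $X$ and $Y$. Reconnecting $X$ to $Y$ by an edge $e'\neq e$ of $\hat{\cG}_{2n+1}$ whose label lies in $\cL_A\setminus\{\eta\}$ or in $\cL_{A_0}$ again produces a components spanning tree, now avoiding $\eta$ (its labels remain pairwise from distinct components, since $\hat{T}\setminus\{e\}$ uses nothing from $\cL_A$ or $\cL_{A_0}$), which is $Q(n)$. Feeding this tree, together with the componentwise spanning chain trees furnished by $P(n-1)$ (transported along $A\cong\cG_{2n-1}$), into the decomposition principle yields $P(n)$, closing the induction.

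The crux — and the reason the statement is only conjectured — is the guaranteed existence of a reconnecting edge $e'$ for \emph{every} $A$, $\eta$, and cut $(X,Y)$ arising this way. Whereas the self-similarity of $\cG_{2n+1}$ is supplied by Lemma~\ref{lem:2n+1structure}, the excerpt gives no equally explicit recursion for the component graph $\hat{\cG}_{2n+1}$, and what must be controlled is precisely how the $M[2n]$- and $M[2n+1]$-labels are spread over the sets $\cL_A$: that the external edges crossing any such cut always carry enough labels outside $\{\eta\}$ and from the two ``free'' components $A,A_0$. Making this uniform in $n$ — rather than verifying it case by case, as was done for $S_7$ and $S_9$ — is the real content; a promising route is to transplant onto these necklaces of permutations the greedy spanning-tree recipes that merge the necklaces of the pure cycling register $\mathrm{PCR}_{2n-1}$ into a de Bruijn sequence, mentioned right after Lemma~\ref{lem:2n+1structure}.
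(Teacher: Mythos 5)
The statement you were asked to prove is, in the paper itself, an open conjecture: the authors give no proof, only computer verification for $n=3$ and $n=4$, and your text does not close it either, so what you have is a plan of attack rather than a proof. Concretely, your inductive step for $Q(n)$ begins by taking ``any components spanning tree $\hat{T}$ of $\hat{\cG}_{2n+1}$,'' but for $n\geq 5$ the existence of even one components spanning tree is not known and does not follow from your hypotheses $Q(m)$, $P(m)$ with $m<n$; and the pivotal reconnection claim --- that every cut $(X,Y)$ created by deleting the $\eta$-labelled edge is crossed by some edge of $\hat{\cG}_{2n+1}$ whose label lies in $\cL_A\setminus\{\eta\}$ or in $\cL_{A_0}$ --- is exactly the unproven content, as you concede in your final paragraph. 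The base cases beyond the two computer-verified instances are likewise only sketched (``a short enumeration,'' ``the same bookkeeping''), so no new case of the conjecture is actually established.

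In addition, the reduction $Q(n)\wedge P(n-1)\Rightarrow P(n)$ is flawed as stated. Since $|\cL_A|=|\cV_A|$, a spanning chain tree of a component omits exactly one of its labels; hence in your gluing every component from which $\hat{T}_{2n+1}$ borrows a label ends up with all of its labels used, and the single globally omitted necklace must belong to the unique component contributing no label to $\hat{T}_{2n+1}$. Therefore, to obtain a chain tree of $\cG_{2n+1}$ omitting a prescribed $\eta\in\cL_A$ you need a components spanning tree that uses no label of $\cL_A$ at all (its ``free'' component must be $A$), which is strictly stronger than avoiding the single label $\eta$ as guaranteed by $Q(n)$: if $\hat{T}_{2n+1}$ borrows any $\mu\in\cL_A\setminus\{\eta\}$, the within-$A$ tree is forced to omit $\mu$ and hence to use $\eta$. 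So even granting $Q(n)$, your decomposition principle does not yield $P(n)$ as written; the statement you would need to carry through the induction is a strengthened avoidance property, and nothing in your argument (nor in the paper, which leaves the distribution of the $M[2n]$- and $M[2n+1]$-labels over the components as a further conjecture) supplies the uniform control required. The analogy with merging PCR$_{2n-1}$ necklaces into a de Bruijn sequence is a sensible direction, but it remains a suggestion, not an argument.
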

Conjecture \ref{conj:main} implies Conjecture \ref{conj:consB}, i.e.,
\begin{theorem}
If Conjecture \ref{conj:main} is true then for each $n\geq 2$, there exists
a $(2n+1,M_{2n+1},\mK)$-snake,
where $M_{2n+1}=\frac{(2n+1)!}{2}-(2n-1)$
in which there are only $t_{2n-1}$ and $t_{2n+1}$ p-transitions.
\end{theorem}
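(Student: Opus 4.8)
The plan is to prove, by induction on $n$, the equivalent statement that $\cG_{2n+1}$ contains a chain tree; once this is available, the lemma equating the existence of a chain tree in $\cG_{2n+1}$ with that of a nearly spanning tree in $\hat{H}_{2n+1}$, together with repeated application of Lemma~\ref{lem:chainConnection} (one merge per edge of the tree), produces a $\mK$-snake $C_{2n+1}$ containing every chain and every $[2,1]$-necklace except one. Counting, the $\frac{(2n-2)!}{2}$ chains contribute $((2n+1)(2n)-1)(2n-1)$ permutations each and the $\frac{(2n-2)!}{2}-1$ surviving $[2,1]$-necklaces contribute $2n-1$ each, so $M_{2n+1}=((2n+1)(2n)-1)(2n-1)\cdot\frac{(2n-2)!}{2}+(2n-1)\bigl(\frac{(2n-2)!}{2}-1\bigr)=\frac{(2n+1)!}{2}-(2n-1)$; and since each chain uses only $t_{2n-1}$ and $t_{2n+1}$ while the merge of Lemma~\ref{lem:chainConnection} inserts only $t_{2n+1}$'s and subsequences of existing chains, $C_{2n+1}$ uses only the p-transitions $t_{2n-1}$ and $t_{2n+1}$.

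For the induction it is convenient to prove the slightly stronger assertion: for every $[2,1]$-necklace (label) of $\cG_{2n+1}$ there is a chain tree avoiding it. The base $n=2$ is trivial, since $\cG_5$ has a single vertex, the empty tree is a chain tree, and the single chain is the $(5,57,\mK)$-snake. The case $n=3$ is immediate from Conjecture~\ref{conj:main}: in $\hat{\cG}_7=\cG_7$ every component is a single chain carrying a single label, so a components spanning tree of $\hat{\cG}_7$ is precisely a spanning tree of $\cG_7$ with pairwise distinct labels, i.e. a chain tree, and Conjecture~\ref{conj:main} lets us prescribe the avoided label.

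For $n\ge 4$ I would exploit Lemma~\ref{lem:2n+1structure}: $\cG_{2n+1}$ is the disjoint union of $(2n-3)(2n-2)$ components, each isomorphic to $\cG_{2n-1}$, joined only by $M[2n]$- and $M[2n+1]$-edges, and the label sets $\cL_C$ of the components partition all $[2,1]$-necklaces. Given a target label $\eta$, lying in $\cL_{C^\ast}$ for some component $C^\ast$, first invoke Conjecture~\ref{conj:main} to obtain a components spanning tree $\hat{T}_{2n+1}$ of $\hat{\cG}_{2n+1}$ that avoids $\eta$ and, because no two of its labels lie in the same component, borrows at most one label from each component; arrange moreover that the label-free component is $C^\ast$. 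Inside each component $C$ use the inductive hypothesis to choose a chain tree $\cT_C$ of $C\cong\cG_{2n-1}$ avoiding the (at most one) label that $\hat{T}_{2n+1}$ borrows from $C$, and for the label-free component $C^\ast$ choose $\cT_{C^\ast}$ avoiding $\eta$. Then $\cT=\hat{T}_{2n+1}\cup\bigcup_C\cT_C$ is connected, has exactly $|\cV_{2n+1}|-1$ edges hence is acyclic, and no two of its edges share a label — inside a component by the chain-tree property of $\cT_C$, across components since the $\cL_C$ are disjoint, within $\hat{T}_{2n+1}$ by the definition of a components spanning tree, and between $\hat{T}_{2n+1}$ and $C$ by the choice of $\cT_C$. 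Hence $\cT$ is a chain tree, and it avoids $\eta$, completing the induction.

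The delicate point, and the one I expect to be the real obstacle, is exactly the coordination of the avoided labels in the last step: for $\cT$ to remain a genuine chain tree each component-internal chain tree must miss \emph{precisely} the label that $\hat{T}_{2n+1}$ reuses from that component, and to keep the strengthened induction running one needs, in addition, the freedom to make $C^\ast$ be the component contributing no label to $\hat{T}_{2n+1}$ rather than merely one that avoids $\eta$. Establishing this requires invoking Conjecture~\ref{conj:main} with care (and possibly a small exchange argument on components spanning trees); by contrast the tree/cycle bookkeeping, the translation of a chain tree back into a $\mK$-snake via Lemma~\ref{lem:chainConnection}, and the arithmetic yielding $M_{2n+1}=\frac{(2n+1)!}{2}-(2n-1)$ are routine.
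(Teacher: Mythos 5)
Your proposal follows the same route as the paper: the paper itself gives only a one-sentence justification after the theorem (``by using Conjecture~\ref{conj:main} recursively, for each $n\geq 3$ and each necklace $\eta$ in class $[2,1]$ we can construct a chain tree in $\cG_{2n+1}$ avoiding $\eta$''), and your write-up is exactly that recursion made explicit: components spanning tree from Conjecture~\ref{conj:main}, chain trees inside the components of Lemma~\ref{lem:2n+1structure}, the chain-tree/nearly-spanning-tree equivalence, merging via Lemma~\ref{lem:chainConnection}, and the count $((2n+1)(2n)-1)(2n-1)\frac{(2n-2)!}{2}+(2n-1)\bigl(\frac{(2n-2)!}{2}-1\bigr)=\frac{(2n+1)!}{2}-(2n-1)$, all of which are correct, as is the observation that only $t_{2n-1}$ and $t_{2n+1}$ occur.

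The ``delicate point'' you flag is genuine, and you should know the paper does not resolve it either. In your assembly the internal spanning chain tree of a component $C$ omits exactly one label of $\cL_C$, so if the components spanning tree borrows \emph{any} label from the component $C^\ast$ containing $\eta$, then every label of $\cL_{C^\ast}$, including $\eta$, appears in the final tree; avoidance of $\eta$ therefore forces $C^\ast$ to be the label-free component of $\hat{T}_{2n+1}$, not merely a component whose label $\eta$ is unused. Conjecture~\ref{conj:main} as literally stated only guarantees the latter. The two formulations coincide at the base $n=3$, since each component of $\hat{\cG}_7$ carries a single chain and a single label (which is why the computer-verified cases and the step to $\cG_9$ go through), but from $\cG_{11}$ onward the strengthened induction hypothesis you state needs the per-component form, i.e.\ that the label-free component can be prescribed. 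So either the conjecture should be read (or restated) in that slightly stronger per-component form, or the exchange argument you allude to must be supplied; since the theorem is in any case conditional on the conjecture, this is a matter of matching the hypothesis to what the recursion actually uses rather than a flaw in your construction, but it is a real gap in both your sketch and the paper's.
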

Conjecture \ref{conj:main} was verified by computer search for $n=3$ and $n=4$.
By using Conjecture \ref{conj:main} recursively,
for each $n\geq 3$, and for each necklace $\eta$ in class $[2,1]$,
we can construct a chain tree $T$ in $\cG_{2n+1}$,
which doesn't include $\eta$ as a label on an edge in $T$.

\begin{cor}
There exist a $(7,2515,\mK)$-snake and a~$(9,181433,\mK)$-snake,
and hence $\lim\limits_{n\to \infty} \frac{M_{2n+1}}{S_{2n+1}}\approx 0.4743$.
\end{cor}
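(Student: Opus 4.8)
The plan is to split the claim into two independent parts: first to produce the $(7,2515,\mK)$- and $(9,181433,\mK)$-snakes by running the necklace-based construction of Section~\ref{sec:direct} on explicit trees, and then to feed the $S_9$ snake into the recursive construction of Section~\ref{sec:recursive} to get the asymptotic ratio. For $2n+1=7$ I would use the tree $\hat{T}_7$ written out in Example~\ref{ex:consB_7} (and drawn in Example~\ref{ex:g_7}): one checks that its $11$ edges form a nearly spanning tree of $\hat{H}_7$ --- they span the $12$ chain-vertices together with $11$ of the $12$ $[2,1]$-necklaces, leaving out $\eta_{12}$, and contain no cycle --- and then performs the $\frac{(2n-2)!}{2}-1=11$ merges it prescribes, each an instance of Lemma~\ref{lem:chainConnection}. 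Counting codewords, the $12$ chains contribute $12\cdot((2n+1)(2n)-1)(2n-1)=12\cdot 41\cdot 5$ permutations and the $11$ absorbed $[2,1]$-necklaces contribute $11\cdot 5$, for a total of $2515=\frac{7!}{2}-5$, so the resulting $\mK$-snake $C_7$ is a $(7,2515,\mK)$-snake. For $2n+1=9$ the argument is identical once a component spanning tree $\hat{T}_9$ of $\hat{\cG}_9$ is available --- equivalently, once the case $n=4$ of Conjecture~\ref{conj:main} is known, which is exactly the reported computer search --- and it yields $M_9=\frac{9!}{2}-7=181433$.

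Next I would feed $C_9$ into the recursive construction, for which I must check that $C_9$ satisfies (Q1)--(Q3). Property (Q1) holds because $C_9$ uses only the p-transitions $t_7$ and $t_9$, both of odd index; (Q2) holds because $t_9$ occurs in $C_9$, so its transitions sequence may be cyclically shifted to end with $t_9$; and (Q3) holds because $C_9$ omits only the seven permutations of one $[2,1]$-necklace, so for every $z\in[9]$ some codeword of $C_9$ ends in $z$ (for $z=1$, for instance, the whole class $[x,1]$ with $x\ne 2$ remains). With (Q1)--(Q3) verified, the recursive-construction theorem of Section~\ref{sec:recursive} applies with $C_9$ as base case, and iterating it gives, for every $n\geq 5$, a $(2n+1,M_{2n+1},\mK)$-snake with $M_{2n+1}=((2n+1)(2n)-1)M_{2n-1}$.

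Finally, I would compute the ratio as in Section~\ref{sec:recursive}: writing $D_{2n+1}=M_{2n+1}/(2n+1)!$, the recursion gives $D_{2n+1}/D_{2n-1}=1-\frac{1}{(2n+1)(2n)}$ for $n\geq 5$, so $\lim_{n\to\infty}D_{2n+1}=D_9\prod_{n=5}^{\infty}\bigl(1-\frac{1}{(2n+1)(2n)}\bigr)$ with $D_9=\frac{181433}{362880}$. This is the infinite product already evaluated in the paper, with its first three factors (those for $n=2,3,4$) removed and $D_9$ substituted for $D_3=\frac12$; a routine evaluation (e.g.\ via WolframAlpha) gives $\lim_{n\to\infty}\frac{M_{2n+1}}{S_{2n+1}}\approx 0.4743$. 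The one genuinely hard step is producing the tree $\hat{T}_9$, i.e.\ verifying the case $n=4$ of Conjecture~\ref{conj:main}: this is a finite but large search ($\hat{\cG}_9$ has $\frac{6!}{2}=360$ chain-vertices), which the paper discharges by computer. Everything else --- checking $\hat{T}_7$ by hand, the (Q1)--(Q3) verification, the codeword counts, and the product evaluation --- is routine.
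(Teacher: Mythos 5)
Your proposal is correct and follows essentially the same route the paper intends: the two snakes come from the direct necklace construction via the explicit tree $\hat{T}_7$ of Example~\ref{ex:consB_7} and the computer-verified case $n=4$ of Conjecture~\ref{conj:main}, and the ratio $\approx 0.4743$ comes from seeding the recursive construction of Section~\ref{sec:recursive} with the $(9,181433,\mK)$-snake and evaluating $D_9\prod_{n\ge 5}\bigl(1-\tfrac{1}{(2n+1)(2n)}\bigr)$. Your explicit verification of (Q1)--(Q3) for $C_9$ and the codeword counts $12\cdot 41\cdot 5+11\cdot 5=2515=\tfrac{7!}{2}-5$ and $\tfrac{9!}{2}-7=181433$ merely spell out details the paper leaves implicit.
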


Note that the ratio $\lim\limits_{n\to \infty} \frac{M_{2n+1}}{S_{2n+1}}$ would be improved,
if there exists a $(2m+1,\frac{(2m+1)!}{2}-(2m-1),\mK)$-snake for some $m>4$.

\begin{conjecture}
The $(2n-3)(2n-2)$ components in $\hat{\cG}_{2n+1}$
can be arranged in a $(2n-3)\times (2n-2)$ grid.
The edges which are sign by $M[2n]$ define $2n-2$ cycles of length $2n-3$.
Each cycle contains the vertices of exactly one column, and is called an $M[2n]$-cycle.
The edges which are sign with $M[2n+1]$ are between two components in different columns,
and they also define $2n-2$ cycles of length $2n-3$.
Such a cycle will be called an $M[2n+1]$-cycle.
Each multi-edge between two components has $\frac{(2n-4)!}{2}$ parallel edges (the number of chains in the component).
Parallel edges have the same sign $x$, $x\in \{2n,2n+1\}$,
but different labels (i.e., $M[x]$-connection, but with different $[2,1]$-necklaces).
\end{conjecture}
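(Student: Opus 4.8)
The plan is to turn the informal description into a bookkeeping exercise: assign explicit coordinates to the components of $\cG_{2n+1}$, compute how an $M[x]$-connection with $x\in\{2n,2n+1\}$ transforms the cyclic order that indexes a chain, and then read off the cycle structure and edge multiplicities from those transformation rules. To set up coordinates, recall that by Lemma~\ref{lem:2n+1structure} the components are the $(2n-3)(2n-2)$ copies of $\cG_{2n-1}$ inside $\cG_{2n+1}$, and all inter-component edges are signed $M[2n]$ or $M[2n+1]$; so it suffices to locate the component of each chain $c[\beta]$ and to track how that location changes under the two relevant connections. The natural candidate is to delete $2n$ and $2n+1$ from the cyclic order $\beta$ of $\{3,\dots,2n+1\}$, obtaining a cyclic order of $\{3,\dots,2n-1\}$ with $2n-3$ gaps, let $p\in\{1,\dots,2n-3\}$ be the gap into which $2n$ falls, and let $q\in\{1,\dots,2n-2\}$ be the gap into which $2n+1$ falls among the $2n-2$ gaps of the cyclic order of $\{3,\dots,2n-1,2n\}$; then $(p,q)$ ranges over a $(2n-3)\times(2n-2)$ grid. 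The first thing to verify is that $(p,q)$ is constant on each component, which follows from the proof of Lemma~\ref{lem:2n+1structure}: the internal connections $M[x]$ with $x\le 2n-1$ are dictated by the sub-tree $T_{2n-1}\subseteq T_{2n+1}$ and hence never move $2n$ or $2n+1$ relative to the smaller symbols.

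I would then analyze the $M[2n]$- and $M[2n+1]$-connections separately. Fix $x\in\{2n,2n+1\}$ and a permutation $\alpha$ of $[2n+1]\setminus\{x,1,2\}$; by definition the $M[x]$-connection through the label $[\alpha,x]-[2,1]$ merges the chain containing $[\alpha,2,1,x]$ with the chain containing $[\alpha,1,x,2]$. Using Corollary~\ref{cor:mergeCyclesCondition} and Lemma~\ref{prop:mergeClasses1} to unwind how these two permutations sit inside their chains along $T_{2n+1}$, one obtains the cyclic-order index of each chain explicitly in terms of $\alpha$ and $x$. The expected outcome is that the $M[2n]$-connections advance the coordinate $p$ cyclically by one while fixing the column, so their orbits are exactly the $2n-2$ columns, each of length $2n-3$; and that the $M[2n+1]$-connections act by a shift that always moves a component to a strictly different column, with orbits forming $2n-2$ further cycles of length $2n-3$ that meet pairwise distinct columns. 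Finally, for a fixed ordered pair of components joined by $M[x]$-edges, the admissible $\alpha$'s are in bijection with the $\frac{(2n-4)!}{2}$ chains of the component; this yields $\frac{(2n-4)!}{2}$ parallel edges, all of sign $x$, carrying the labels $[\alpha,x]-[2,1]$, which are pairwise distinct since distinct cyclic orders give distinct $[2,1]$-necklaces. Assembling these facts establishes every clause of the conjecture.

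The main obstacle is the middle step: one must pin down, uniformly in $n$, exactly which cyclic order indexes the chain that realizes a prescribed permutation at a prescribed class, and this forces an induction over the recursive construction of $T_{2n+1}$ in Theorem~\ref{thm:treeCompose} together with careful tracking of how each application of Lemma~\ref{prop:mergeClasses1} rotates and interleaves the symbols. Once the formulas are in hand the conceptual content is small --- the positions of the two largest symbols behave like a pair of nearly independent odometers of periods $2n-3$ and $2n-2$ --- but verifying the transition rules honestly, and in particular checking that no $M[2n]$- or $M[2n+1]$-connection degenerates (the analogue of why there are no $M[3]$-, $M[4]$-, or $M[5]$-connections), is where essentially all the work lies.
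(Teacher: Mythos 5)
This statement is labeled as a conjecture in the paper, and the paper offers no proof of it: the authors only support it by the computed pictures of $\hat{\cG}_7$ and $\hat{\cG}_9$ (and the related computer verification of Conjecture~\ref{conj:main} for $n=3,4$). So there is no paper proof to compare against, and your text should be judged on whether it itself establishes the statement. It does not: it is a proof plan whose decisive steps are left as ``expected outcomes.'' The coordinate idea (index a component by the gaps $p$ and $q$ into which $2n$ and $2n+1$ fall after deleting them from the cyclic order) is a reasonable way to organize a proof, but every clause of the conjecture is then made to rest on transition rules that you never derive: that $(p,q)$ is constant on a component, that an $M[2n]$-connection shifts $p$ cyclically while fixing the column, that $M[2n+1]$-connections always change column and close up into $2n-2$ cycles of length $2n-3$, that no such connection degenerates into a single component, and that the edges between a fixed pair of components are exactly $\frac{(2n-4)!}{2}$ parallel edges with distinct necklace labels.

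The genuine gap is the one you yourself flag: to know which chain $c[\gamma]$ contains a prescribed permutation such as $[\alpha,2,1,x]$ or $[\alpha,1,x,2]$, one must trace that permutation's necklace through the sequence of joins dictated by the recursively built tree $T_{2n+1}$ of Theorem~\ref{thm:treeCompose} and Lemma~\ref{prop:mergeClasses1}, i.e.\ compute how the cyclic index of the $[1,2]$-necklace of the chain relates to $\alpha$ and $x$. Without that computation the asserted action on $(p,q)$ is unsupported, and even your preliminary claim that the coordinates are well defined on components does not simply ``follow from the proof of Lemma~\ref{lem:2n+1structure},'' since that lemma records which signs occur inside and between components but not how the chain indices transform. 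In short, the proposal identifies a plausible bookkeeping framework but defers precisely the verification that would constitute a proof; as it stands the conjecture remains unproved, which is consistent with the paper leaving it open.
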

\begin{example}
An illustration for the structure of $\hat{\cG}_{2n+1}$ for $n=3$ is presented in Example~\ref{ex:g_7},
and for $n=4$ is depicted in Figure~\ref{fig:hatG9}.
In $\hat{\cG}_9$ there are $30$ components,
where each component is isomorphic to $\hat{\cG}_7$
(thus, it contains $12$ chains and $12$ $[2,1]$-necklaces).
\begin{figure}[H]
\vspace{-5pt}
\begin{tikzpicture} [-, scale=0.5, node distance=1.5 cm,
  thick,main node/.style={font={\normalsize},circle,draw,scale=0.5,minimum size=20 pt}
  ,edge node/.style={font={\normalsize},scale=0.5}]

  \node[main node] (6) {1};		%17
  \node[main node] (7) [below of=6]{2};		%23
  \node[main node] (8) [below of=7]{3};		%18
  \node[main node] (9) [below of=8]{4};		%19
  \node[main node] (10) [below of=9]{5};		%20

  \node[main node] (1)[right= 1.15cm of 6] {6};					%13
  \node[main node] (2) [below of=1]{7};	 	%26
  \node[main node] (3) [below of=2]{8};			%8
  \node[main node] (4) [below of=3]{9};			%4
  \node[main node] (5) [below of=4]{10};			%2

  \node[main node] (26) [right= 1.15cm of 1]{11};		%15
  \node[main node] (27) [below of=26]{12};		%3
  \node[main node] (28) [below of=27]{13};		%9
  \node[main node] (29) [below of=28]{14};		%29
  \node[main node] (30) [below of=29]{15};		%6

  \node[main node] (11) [right= 1.15cm of 26]{16};		%14
  \node[main node] (12) [below of=11]{17};		%7
  \node[main node] (13) [below of=12]{18};		%25
  \node[main node] (14) [below of=13]{19};		%1
  \node[main node] (15) [below of=14]{20};		%5

  \node[main node] (16) [right= 1.15cm of 11]{21};		%16
  \node[main node] (17) [below of=16]{22};		%27
  \node[main node] (18) [below of=17]{23};		%10
  \node[main node] (19) [below of=18]{24};		%12
  \node[main node] (20) [below of=19]{25};		%11

  \node[main node] (21) [right= 1.15cm of 16]{26};		%22
  \node[main node] (22) [below of=21]{27};		%21
  \node[main node] (23) [below of=22]{28};		%30
  \node[main node] (24) [below of=23]{29};		%28
  \node[main node] (25) [below of=24]{30};		%24

  \path[thin]
    (1) edge (2)
    (2) edge (3)
    (3) edge (4)
    (4) edge (5)
    (5) edge [bend left=20] (1)

    (6) edge (7)
    (7) edge (8)
    (8) edge (9)
    (9) edge (10)
    (10) edge [bend left=20] (6)

    (11) edge (12)
    (12) edge (13)
    (13) edge (14)
    (14) edge (15)
    (15) edge [bend left=20] (11)

    (16) edge (17)
    (17) edge (18)
    (18) edge (19)
    (19) edge (20)
    (20) edge [bend left=20] (16)

    (21) edge (22)
    (22) edge (23)
    (23) edge (24)
    (24) edge (25)
    (25) edge [bend left=20] (21)

    (26) edge (27)
    (27) edge (28)
    (28) edge (29)
    (29) edge (30)
    (30) edge [bend left=20] (26);

\path[thin]
% 24-9-20-8-7
    (25) edge (28)
    (28) edge[bend left=16] (10)
    (10) edge (3)
    (3)  edge (12)
    (12) edge (25)

% 5-30-2-19-12
    (15) edge[bend right=16] (23)
    (23) edge[bend left=12, out=60, in=200] (5)
    (5)  edge (9)
    (9)  edge[bend left=16] (19)
    (19) edge (15)

% 13-15-14-16-22
    (1)  edge (26)
    (26) edge (11)
    (11) edge (16)
    (16) edge (21)
    (21) edge[bend right=16] (1)

% 23-26-29-27-25
    (7)  edge (2)
    (2)  edge (29)
    (29) edge[bend left=16] (17)
    (17) edge (13)
    (13) edge (7)

% 21-18-1-6-10
    (22) edge[bend right=3, out=10, in=200] (8)
    (8)  edge (14)
    (14) edge (30)
    (30) edge[bend right=16] (18)
    (18) edge (22)

% 28-17-3-4-11
    (24) edge[out=125, in=305] (6)
    (6)  edge (27)
    (27) edge (4)
    (4)  edge (20)
    (20) edge (24);
\end{tikzpicture}
\vspace{-30pt}
\caption{The graph $\hat{\cG}_9$} \label{fig:hatG9}
\end{figure}
\end{example}

\section{Conclusions and Future Research}
\label{sec:conclude}

Gray codes for permutations using the operation push-to-the-top
and the Kendall's $\tau$-metric were discussed.
We have presented a framework for constructing snake-in-the-box codes
for $S_n$. The framework for the construction yield a recursive construction
with large snakes. A direct construction to obtain snakes which might
be optimal in length was also presented. Several questions arise from our discussion
and they are considered for current and future research.
\begin{enumerate}
\item Complete the direct construction for snakes of length
$\frac{(2n+1)!}{2} - 2n+1$ in $S_{2n+1}$.

\item Can a snake in $S_{2n+1}$ have size larger than
$\frac{(2n+1)!}{2} - 2n+1$?

\item Prove or disprove that the length of the longest snake in $S_{2n}$
is not longer than the length of the longest snake in~$S_{2n-1}$.

\item Examine the questions in this paper for the $\ell_\infty$ metric.
\end{enumerate}

\begin{center}
{\bf Acknowledgment}
\end{center}
The authors would like to thank the anonymous reviewers for their careful reading of the paper.
Especially, one of the reviewers pointed out on the good ratio $\lim\limits_{n\to \infty} \frac{M_{2n+1}}{S_{2n+1}}\approx 0.4338$ compared to one in \cite{YeSc12}.

\vspace{1cm}

\textbf{Michal Horovitz} was born in Israel in 1987. She received the
B.Sc. degree from the Open University of Israel, Ra'anana, Israel, in 2009,
from the department of Mathematics and from the department of Computer Science.
She is currently a Ph.D. student in the Computer Science Department at the Technion
- Israel Institute of Technology, Haifa, Israel.
Her research interests include coding theory with applications to non-volatile memories
and combinatorics.

\vspace{1cm}

\textbf{Tuvi Etzion} (M'89--SM'94--F'04) was born in Tel Aviv, Israel,
in 1956. He received the B.A., M.Sc., and D.Sc. degrees from the
Technion - Israel Institute of Technology, Haifa, Israel, in 1980,
1982, and 1984, respectively.

From 1984 he held a position in the Department of Computer Science
at the Technion, where he has a Professor position. During the
years 1985-1987 he was Visiting Research Professor with the
Department of Electrical Engineering - Systems at the University
of Southern California, Los Angeles. During the summers of 1990
and 1991 he was visiting Bellcore in Morristown, New Jersey.
During the years 1994-1996 he was a Visiting Research Fellow in
the Computer Science Department at Royal Holloway College, Egham,
England. He also had several visits to the Coordinated Science
Laboratory at University of Illinois in Urbana-Champaign during
the years 1995-1998, two visits to HP Bristol during the summers
of 1996, 2000,  a few visits to the Department of Electrical
Engineering, University of California at San Diego during the
years 2000-2012, and several visits to the Mathematics Department
at Royal Holloway College, Egham, England, during the years
2007-2009.

His research interests include applications of discrete
mathematics to problems in computer science and information
theory, coding theory, and combinatorial designs.

Dr Etzion was an Associate Editor for Coding Theory for the IEEE
Transactions on Information Theory from 2006 till 2009.
From 2004 to 2009, he was an Editor for the
Journal of Combinatorial Designs.
From 2011 he is an Editor for Designs, Codes, and Cryptography.
From 2013 he is an Editor for Advances of Mathematics in Communications.

\vspace{1cm}

\end{document}